\begin{document}
\begin{frontmatter}
  \title{Data Layout from a Type-Theoretic Perspective} 
  \subtitle{{\large Invited Paper}}
  \author{Henry DeYoung\thanksref{myemail}}
  \author{Frank Pfenning\thanksref{coemail}\thanksref{ALL}}
  \address{Computer Science Department\\ Carnegie Mellon University\\
    Pittsburgh, United States}
  \thanks[myemail]{Email: \href{mailto:hdeyoung@cs.cmu.edu} {\texttt{\normalshape hdeyoung@cs.cmu.edu}}} 
  \thanks[coemail]{Email: \href{mailto:fp@cs.cmu.edu} {\texttt{\normalshape fp@cs.cmu.edu}}}
  \thanks[ALL]{These are notes for an invited talk by the second author given at MFPS 2022.  We would like to thank the program committee for the invitation.
We would also like to thank Sophia Roshal for her feedback on a draft of this paper.
This material is based upon work supported by the United States Air Force and DARPA under Contract No.\ FA8750-18-C-0092.}   
\begin{abstract}
  The specifics of data layout can be important for the efficiency of functional programs and interaction with external libraries.
In this paper, we develop a type-theoretic approach to data layout that could be used as a typed intermediate language in a compiler or to give a programmer more control.
Our starting point is a computational interpretation of the semi-axiomatic sequent calculus for intuitionistic logic that defines abstract notions of cells and addresses.
We refine this semantics so addresses have more structure to reflect possible alternative layouts without fundamentally departing from intuitionistic logic.
We then add recursive types and explore example programs and properties of the resulting language.
\end{abstract}
\begin{keyword}
  sequent calculus, Curry--Howard correspondence, futures-based concurrency, data layout
\end{keyword}
\end{frontmatter}
\section{Introduction}\label{sec:introduction}

The Curry-Howard isomorphism establishes a firm relationship between the propositions of
intuitionistic logic and types for functional programs.  Even if both sides of this correspondence
are intuitionistic, the precise relation between proofs and functional programs varies significantly
with the proof system.  Curry~\cite{Curry34}, for example, related Hilbert-style axiomatic proofs to
combinators and combinatory reduction.  Howard~\cite{Howard69}, on the other hand, related natural
deductions to terms in the typed $\lambda$-calculus.
As a further example, Herbelin~\cite{Herbelin94csl} related an intuitionistic sequent calculus with a
``stoup'' (LJT) to a $\lambda$-calculus with a form of explicit substitution.  The computational
mechanism in each of these examples is quite different and in each case derives from a
proof-theoretic notion of reduction.  The logical motivation for these reductions stems from the
desire to prove, constructively at the meta-level, the \emph{consistency} of specific systems of
inference rules~\cite{Gentzen35}.  Moreover, proofs that are fully reduced (called \emph{normal} or
\emph{cut-free}, depending on the system) exhibit the \emph{subformula property}, which means they
can serve as \emph{verifications}~\cite{Dummett91,MartinLof83}.

The basis for this paper is the recently discovered
\emph{semi-axiomatic sequent calculus}~\cite{DeYoung20fscd} and its correspondence to futures~\cite{Baker77sigplan,Halstead85}.  Briefly, the semi-axiomatic sequent calculus starts
from the intuitionistic sequent calculus and replaces the right rules for positive connectives
(positive conjunctions $A_1 \land A_2$, positive unit $\top$, and disjunctions $A_1 \lor A_2$) and the left rules for negative connectives (implications $A_1 \limp A_2$ and
negative conjunctions $A_1 \with A_2$) by axioms.  For example, the six axioms $A_1, A_2 \vdash A_1 \land A_2$;
$A_k \vdash A_1 \lor A_2$ for $k \in \set{1,2}$; $A_1, A_1 \limp A_2 \vdash A_2$; and $A_1 \with A_2 \vdash A_k$ for $k \in \set{1,2}$ replace the usual $\rrule{\land}$, $\rrule{\lor}_k$, $\lrule{\limp}$, and $\lrule{\with}_k$ rules, respectively,
while the $\lrule{\land}$, $\lrule{\lor}$, $\rrule{\limp}$, and $\rrule{\with}$ rules remain unchanged.  This
restructuring leads to a failure of Gentzen's cut elimination theorem~\cite{Gentzen35}: some cuts
(called \emph{snips}) are essential and cannot be eliminated.  Fortunately, all snips arise from the
new axioms and can be shown to obey a subformula property~\cite[Theorem~7]{DeYoung20fscd} derived from the
evident subformula property of the new axioms ($A_1$ and $A_2$ are subformulas of $A_1 \land A_2$, etc.).

Computationally, a process is assigned to each proof in the semi-axiomatic sequent calculus, arriving at a type theory dubbed SAX~\cite[section~5]{DeYoung20fscd}.  Both cuts and snips are treated as the
allocation of a future, with the two premises of the cut or snip computing in parallel.  The first
premise computes and writes the value of the future, while the second may read its value,
potentially blocking until it has been written.  The \emph{type} of the future (that is, the
\emph{cut formula}) determines the form of value that it ultimately holds.

The key observation underlying this paper is that we have the freedom to give different
computational interpretations to cuts and snips.  The subformula property of the new axioms,
reflected in the snips, is manifest in \emph{projections} from the values of larger types to those of smaller
types.  For example, ordinarily we might think of assigning terms to the axiom
$A_1, A_2 \vdash A_1 \land A_2$ as $x_1{:}A_1, x_2{:}A_2 \vdash \langle x_1, x_2\rangle : A_1 \land A_2$.  Instead, if we ensure that
$x : A_1 \land A_2$ has already been allocated, the axiom expresses \emph{address projections}
$x \cdot \pi_1 : A_1, x \cdot \pi_2 : A_2 \vdash x : A_1 \land A_2$.  Taking it further, this axiom can be seen as \emph{merely computing the addresses of $A_1$ and $A_2$, given
  the address for the pair $A_1 \land A_2$}.  Under this interpretation, only cuts allocate memory for a
future.  A snip is then the parallel composition of two processes that respectively write to and
read from a shared portion of a future that has already been allocated.

When we go beyond the usual propositions-as-types correspondence and add recursive types and recursive processes to our language, we have to slightly modulate our approach.  For example, a type
of lists of booleans satisfying the equation
$\boollist \defd \plus*{\mt{nil}\colon \one , \mt{cons}\colon \bool \tensor \boollist}$
would require an unbounded (or unpredictable) amount of space for a value of type $\boollist$.  In
order to avoid this problem, we introduce a type constructor $\dn A$, originating in
adjoint logic~\cite{Benton94csl,Reed09un,Pruiksma18aun},
that is
inhabited by addresses for cells of type $A$, \emph{i.e.}, pointers.  Logically, this has no force in the sense that
$A \mathrel{\dashv\,\vdash} \dn A$, but it affects the fine structure of proofs.  Every recursive type
must be \emph{guarded} by a $\dn$ shift, as in
$\boollist \defd \plus*{\mt{nil}\colon \one , \mt{cons}\colon \dn \bool \tensor \dn \boollist}$.
This expresses a layout where a binary number is either just the tag $\mt{nil}$ or a
tag $\mt{cons}$ together with a pair of addresses, $\mt{cons}\:a_1\:a_2$.  Through different ways to place $\dn$ shifts
we can control the layout of the data.
For example, here we would likely prefer $\boollist \defd \plus*{\mt{nil}\colon \one , \mt{cons}\colon \bool \tensor \dn \boollist}$, where the address of a boolean is replaced by the boolean itself.

In summary, the main contributions of this paper are:
\begin{itemize}
\item a reformulated semi-axiomatic sequent calculus that syntactically distinguishes
  cuts and snips so they can be assigned a different dynamics (\cref{sec:SNAX}); and simultaneously
\item a new type theory, derived by Curry--Howard interpretation of the reformulated semi-axiomatic sequent calculus, for specifying data layout for futures-based concurrency (also in \cref{sec:SNAX}); and
\item proofs of preservation and progress (including equirecursive
  types and recursive processes, \cref{sec:SNAX:safety}).
\end{itemize}
Related work is discussed in \cref{sec:related}.
An extended version of this paper can be found at arXiv~\cite{DeYoung22arxiv}.

\section{SAX: A semi-axiomatic type theory for shared memory concurrency}\label{sec:SAX}

The semi-axiomatic sequent calculus is a presentation of intuitionistic logic that blends inference rules of the sequent calculus with axioms of the Hilbert calculus~\cite{DeYoung20fscd}.
Perhaps surprisingly, there is a Curry--Howard correspondence between the semi-axiomatic sequent calculus and a type theory for futures~\cite{Baker77sigplan,Halstead85}, a form of write-once shared memory concurrency.
Here we give a detailed review of the semi-axiomatic sequent calculus and its SAX type theory (though slightly altered to use explicit rules of weakening and contraction) because they serve as the cornerstones for our SNAX type theory.

\paragraph*{Judgmental aspects.}
Following the usual Curry--Howard pattern, the propositions of intuitionistic logic become SAX types, sequents become SAX static typing judgments, and their proofs become SAX processes.
Sequents and SAX typing judgments correspond as follows:
\begin{equation*}
    \underbrace{A_1 , A_2 , \dotsc , A_n}_{\text{\emph{antecedents}}} \vdash \underbrace{A\vphantom{A_n}}_{\mathclap{\text{\emph{succedent}}}}
\qquad\qquad\quad
    \underbrace{a_1{:}A_1 , a_2{:}A_2 , \dotsc , a_n{:}A_n\vphantom{)}}_{\text{\emph{may read from}}} \vdash P :: \underbrace{(a : A)\vphantom{A_n}}_{\mathclap{\text{\emph{must write to}}}}
\end{equation*}
where the typing judgment is read as ``Process $P$ may read data of types $A_1, A_2, \dotsc, A_n$ from addresses $a_1, a_2, \dotsc, a_n$, respectively, and must write data of type $A$ to address $a$.''
This discipline also serves to enforce the invariant that each address is written by exactly one process.
For this reason, we will sometimes refer to the address that a process must write to as the process's \emph{destination}.

In both the semi-axiomatic proof theory and the SAX type theory, we use the metavariable $\ctx$ to stand for contexts -- of either antecedents or readable addresses, respectively.
SAX addresses have no structure, being only variables $x$ that will be mapped to concrete addresses $\alpha$ at runtime.
\begin{equation*}
  \begin{alignedat}[t]{2}
    \text{\emph{Contexts}} &\quad&
      \ctx &\Coloneqq \ctxe* \mid \ctx , A
  \end{alignedat}
  \qquad\qquad\quad
  \begin{alignedat}[t]{2}
    \text{\emph{Contexts}} &\quad&
      \ctx &\Coloneqq \ctxe* \mid \ctx , a{:}A
    \\
    \text{\emph{Addresses}} &&
      a , b , c , d &\Coloneqq x \mid \alpha
  \end{alignedat}
\end{equation*}

\paragraph*{Weakening and contraction.}
For reasons that will become clear in \cref{sec:SNAX}, we diverge slightly from the presentation of the semi-axiomatic sequent calculus and its SAX type theory seen in~\cite{DeYoung20fscd} and use explicit rules for weakening and contraction.
With respect to process syntax, both weakening and contraction are silent.
(The explicit rules for weakening and contraction mean that our SAX typing rules will not immediately yield a syntax-directed type checking algorithm.
But conversion to the implicit weakening and contraction of~\cite{DeYoung20fscd} is standard, and those rules are directly suitable for type checking.)
\begin{equation*}
  \begin{gathered}[t]
    \infer[\jrule{W}]{\ctx , A \vdash C}{
      \ctx \vdash C}
    \\[2ex]
    \infer[\jrule{C}]{\ctx , A \vdash C}{
      \ctx , A , A \vdash C}
  \end{gathered}
  \qquad\qquad\quad
  \begin{gathered}[t]
    \infer[\jrule{W}]{\ctx , a{:}A \vdash P :: (c : C)}{
      \ctx \vdash P :: (c : C)}
    \\[2ex]
    \infer[\jrule{C}]{\ctx , a{:}A \vdash P :: (c : C)}{
      \ctx , a{:}A , a{:}A \vdash P :: (c : C)}
  \end{gathered}
\end{equation*}

\paragraph*{Cut and identity.}
The semi-axiomatic sequent calculus's cut rule (inherited from the sequent calculus) corresponds to SAX's static typing rule for a notion of \emph{futures} for concurrent computation, $\cut{x}{P}{Q}$.
\begin{equation*}
  \infer[\jrule{CUT}]{\ctx_1 , \ctx_2 \vdash C}{
    \ctx_1 \vdash A & \ctx_2 , A \vdash C}
  \qquad\qquad\quad
  \infer[\jrule{CUT}]{\ctx_1 , \ctx_2 \vdash \cut{x}{P}{Q} :: (c : C)}{
    \ctx_1 \vdash P :: (x : A) & \ctx_2 , x{:}A \vdash Q :: (c : C) &
    \text{($x$ fresh)}}
\end{equation*}
Operationally, $\cut{x}{P}{Q}$ runs by first allocating memory for data of type $A$ and then running, in parallel, processes $P$ and $Q$ to write data to addresses $x$ and $c$, respectively.
If $Q$ tries to read from address $x$ before $P$ has written to $x$, then $Q$ will block until $P$ does write to $x$.\footnote{%
It is also possible to assign a sequential semantics (akin to call-by-value) or a call-by-need semantics to cuts~\cite{Pruiksma22jfp}.
In this paper, we will use the simplest and most general semantics, which is concurrent.}

The semi-axiomatic sequent calculus's identity rule (also inherited from the sequent calculus) corresponds to the SAX typing rule for a primitive operation, $\id{a}{b}$, for copying data from address $b$ to address $a$.
This copy operation is shallow: it does not follow pointers to copy recursively.
\begin{equation*}
  \infer[\jrule{ID}]{A \vdash A}{}
  \qquad\qquad\quad
  \infer[\jrule{ID}]{b{:}A \vdash \id{a}{b} :: (a : A)}{}
  \qquad
  \begin{tikzpicture}
    \node (a)
      [draw, label=left:$a$,
       rectangle split, rectangle split parts=1,
       rectangle split horizontal,
       rectangle split empty part width=0.5ex,
       rectangle split empty part height=1.5ex]
      {};
    \node (b)
      [draw, label=left:$b$,
       rectangle split, rectangle split parts=1,
       rectangle split horizontal,
       rectangle split empty part width=0.5ex,
       rectangle split empty part height=1.5ex,
       right=0.75cm of a]
      {};
    \node [at=(b)] {$\cellfiller$};
  \end{tikzpicture}
  \enspace\:\raisebox{1ex}{${}\rightsquigarrow{}$}\,
  \begin{tikzpicture}
    \node (a)
      [draw, label=left:$a$,
       rectangle split, rectangle split parts=1,
       rectangle split horizontal,
       rectangle split empty part width=0.5ex,
       rectangle split empty part height=1.5ex]
      {};
    \node [at=(a)] {$\cellfiller$};
    \node (b)
      [draw, label=left:$b$,
       rectangle split, rectangle split parts=1,
       rectangle split horizontal,
       rectangle split empty part width=0.5ex,
       rectangle split empty part height=1.5ex,
       right=0.75cm of a]
      {};
    \node [at=(b)] {$\cellfiller$};
  \end{tikzpicture}
\end{equation*}
As we have done here, we will use pictures throughout this section to provide intuition about the way that data is laid out in SAX;
in \cref{sec:SNAX}, these will serve as a point of comparison with the layouts offered by our SNAX type theory.

\paragraph*{Pairs, type $A_1 \tensor A_2$.}
In the Curry--Howard isomorphism between natural deduction and the simply typed $\lambda$-calculus, conjunctions $A_1 \land A_2$ are interpreted as product types $A_1 \tensor A_2$ that describe pairs of values of types $A_1$ and $A_2$, respectively, (as well as describing the expressions that evaluate to such values).
With the shift in perspective to the semi-axiomatic sequent calculus and shared memory concurrency, conjunction has a slightly different -- but very closely related -- interpretation: types $A_1 \tensor A_2$ describe addresses $a$ to which a pair $\tensorV[a_1,a_2]$ of addresses of types $A_1$ and $A_2$, respectively, must be written (as well as describing the processes that must write to such addresses $a$).

The $\arule{\land}$ axiom of the semi-axiomatic sequent calculus thus corresponds to a static typing rule for the process $\tensorA[a_1,a_2]{a}$ that writes a pair of addresses $\tensorV[a_1,a_2]$ to address $a$.
\begin{equation*}
  \infer[\arule{\land}]{A_1 , A_2 \vdash A_1 \land A_2}{}
  \qquad\quad\enspace
  \infer[\arule{\tensor}]{a_1{:}A_1 , a_2{:}A_2 \vdash \tensorA[a_1,a_2]{a} :: (a : A_1 \tensor A_2)}{}
  \quad\enspace
  \begin{tikzpicture}[baseline=(a.south)]
    \node (a)
      [draw, label=left:$a$,
       rectangle split, rectangle split parts=2,
       rectangle split horizontal,
       rectangle split empty part width=0.5ex,
       rectangle split empty part height=1.5ex]
      {\nodepart{two}};
  \end{tikzpicture}
  \enspace\raisebox{1ex}{${}\rightsquigarrow{}$}\,
  \begin{tikzpicture}[baseline=(a.south)]
    \node (a)
      [draw, label=left:$a$, 
       rectangle split, rectangle split parts=2,
       rectangle split horizontal,
       rectangle split empty part width=0.5ex,
       rectangle split empty part height=1.5ex]
      {\nodepart{two}};
    \node (a2) [overlay, above=0.4cm of a.two north, text depth=0pt] {$a_2$};
    \draw [overlay] (a.two north |- a.east) edge[->] (a2);
    \node (a1) [overlay, anchor=base, at=(a.mid |- a2.base), text depth=0pt] {$a_1$};
    \draw [overlay] (a.base |- a.east) edge[->] (a1);
  \end{tikzpicture}
\end{equation*}

As a proof-theoretic aside, in the semi-axiomatic sequent calculus, the above $\arule{\land}$ axiom is used in place of the sequent calculus's full-fledged right rule, $\rrule{\land}$, that has premises $\ctx \vdash A_1$ and $\ctx \vdash A_2$.
Using the cut and identity rules, the $\arule{\land}$ axiom and the usual right rule are interderivable and therefore characterize the same notion of intuitionistic conjunction.
The same pattern will hold for the other logical connectives in the semi-axiomatic sequent calculus: depending on whether the connective's polarity~\cite{Andreoliop92,Girard93} is positive or negative, either the sequent calculus right or left rule will be replaced with an equivalently expressive axiom.
Moreover, right axioms and right rules will write, whereas left axioms and left rules will read.

Under our shared memory interpretation, the left rule for conjunction becomes a static typing rule for the process $\tensorL[x_1,x_2]{a}{P}$ that reads the pair of addresses that is stored at address $a$:
\begin{equation*}
  \infer[\lrule{\land}]{\ctx , A_1 \land A_2 \vdash C}{
    \ctx , A_1 , A_2 \vdash C}
  \qquad\quad
  \infer[\lrule{\tensor}]{\ctx , a{:}A_1 \tensor A_2 \vdash \tensorL[x_1,x_2]{a}{P} :: (c : C)}{
    \ctx , x_1{:}A_1 , x_2{:}A_2 \vdash P :: (c : C)}
\end{equation*}
Operationally, the pair of addresses $\tensorV[a_1,a_2]$ stored at address $a$ is read from memory; then variables $x_1$ and $x_2$ are bound to addresses $a_1$ and $a_2$, respectively, and execution continues according to process $P$.

\vspace{\baselineskip}
\noindent\emph{Example.}
At this point, we can consider our first, very simple example process.
The commutativity of conjunction can be captured by a semi-axiomatic proof of $A_1 \land A_2 \vdash A_2 \land A_1$, and its computational content is a shared memory process of type $p : A_1 \tensor A_2 \vdash q : A_2 \tensor A_1$ that creates a new pair at address $q$ by swapping the components of the existing pair at address $p$:
\begin{equation*}
  \infer[\lrule{\land}]{A_1 \land A_2 \vdash A_2 \land A_1}{
    \infer[\arule{\land}]{A_1 , A_2 \vdash A_2 \land A_1}{}}
  \qquad\qquad\quad\!\!
  \begin{gathered}[b]
    p : A_1 \tensor A_2 \vdash
    \tensorL[x_1,x_2]{p}{
    \tensorA[x_2,x_1]{q}} :: (q : A_2 \tensor A_1)
  \\
  \begin{tikzpicture}
    \node (p)
      [draw, label={[text depth=0pt]left:$p$},
       rectangle split, rectangle split parts=2,
       rectangle split horizontal,
       rectangle split empty part width=0.5ex,
       rectangle split empty part height=1.5ex]
      {\nodepart{two}};
    \node (x2) [overlay, above=0.4cm of p.two north, text depth=0pt] {$x_2$};
    \draw [overlay] (p.two north |- p.east) edge[->] (x2);
    \node (x1) [overlay, anchor=base, at=(p.mid |- x2.base), text depth=0pt] {$x_1$};
    \draw [overlay] (p.base |- p.east) edge[->] (x1);
    \node (q)
      [draw, right=0.75cm of p, label={[text depth=0pt]left:$q$},
       rectangle split, rectangle split parts=2,
       rectangle split empty part width=0.5ex,
       rectangle split empty part height=1.5ex,
       rectangle split horizontal]
      {\nodepart{two}};
  \end{tikzpicture}
  \quad\raisebox{1ex}{${}\rightsquigarrow{}$}\enspace
  \begin{tikzpicture}
    \node (p)
      [draw, label={[text depth=0pt]left:$p$},
       rectangle split, rectangle split parts=2,
       rectangle split horizontal,
       rectangle split empty part width=0.5ex,
       rectangle split empty part height=1.5ex]
      {\nodepart{two}};
    \node (x2) [above=0.4cm of p.two north, text depth=0pt] {$x_2$};
    \draw (p.two north |- p.east) edge[->] (x2);
    \node (x1) [anchor=base, at=(p.mid |- x2.base), text depth=0pt] {$x_1$};
    \draw (p.base |- p.east) edge[->] (x1);
    \node (q)
      [draw, right=0.75cm of p, label={[text depth=0pt]left:$q$},
       rectangle split, rectangle split parts=2,
       rectangle split empty part width=0.5ex,
       rectangle split empty part height=1.5ex,
       rectangle split horizontal]
      {\nodepart{two}};
    \draw (q.two north |- q.east) edge[->] ([xshift=3pt]x1.south);
    \draw (q.base |- q.east) edge[->] ([xshift=3pt]x2.south);
  \end{tikzpicture}
  \end{gathered}
\end{equation*}
The process first reads from address $p$ and binds variables $x_1$ and $x_2$ to the pair of addresses that are stored there.
Then it writes those same addresses in reverse order as a pair at address $q$.

\paragraph*{Unit, type $\one$.}
The unit type $\one$ is the nullary form of the product type $A_1 \tensor A_2$ and arises in Curry--Howard correspondence with $\top$.
The $\arule{\top}$ axiom becomes a typing rule for the construct $\oneA{a}$, and the $\lrule{\top}$ rule (which, also being an instance of weakening, is uninteresting in terms of provability, but is computationally relevant) becomes a typing rule for the construct $\oneL{a}{P}$.
\begin{equation*}
  \begin{gathered}[t]
    \infer[\arule{\top}]{\ctxe \vdash \top}{}
    \\[2ex]
    \infer[\lrule{\top}]{\ctx , \top \vdash C}{
      \ctx \vdash C}
  \end{gathered}
  \qquad\qquad\quad
  \begin{gathered}[t]
    \infer[\arule{\one}]{\ctxe \vdash \oneA{a} :: (a : \one)}{}
    \\[2ex]
    \infer[\lrule{\one}]{\ctx , a{:}\one \vdash \oneL{a}{P} :: (c : C)}{
      \ctx \vdash P :: (c : C)}
  \end{gathered}
  \hspace{-1em}
  \begin{tikzpicture}
    \node (a)
      [draw, label=left:$a$,
       rectangle split, rectangle split parts=1,
       rectangle split horizontal,
       rectangle split empty part width=0.5ex,
       rectangle split empty part height=1.5ex]
      {\nodepart{two}};
  \end{tikzpicture}
  \enspace\raisebox{1ex}{${}\rightsquigarrow{}$}\,
  \begin{tikzpicture}
    \node (a)
      [draw, label=left:$a$,
       rectangle split, rectangle split parts=1,
       rectangle split horizontal,
       rectangle split empty part width=0.5ex,
       rectangle split empty part height=1.5ex]
      {\nodepart{two}};
    \node [overlay, anchor=base, at=(a.base), xshift=0.25pt, yshift=1pt] {$\oneV$};
  \end{tikzpicture}
\end{equation*}

\paragraph*{Tagged unions, type $\plus*[\ell \in L]{\ell\colon A_\ell}$.}
Disjunction corresponds to a labeled sum type, $\plus*[\ell \in L]{\ell\colon A_\ell}$, for tagged unions.
Being a positive connective, like conjunction and truth, disjunction's  $\arule{\lor\mkern-1mu}_k$ axiom in the semi-axiomatic sequent calculus becomes a typing rule for writing a tagged address.
Specifically, the process $\plusA[a_k]{a}{k}$ writes a tag $k$ and an address $a_k$ into memory at address $a$.
\begin{equation*}
  \infer[\arule{\lor}_k]{A_k \vdash A_1 \lor A_2}{
    (k \in \set{1,2})}
  \qquad\qquad\enspace
  \infer[\arule{\plus}_k]{a_k{:}A_k \vdash \plusA[a_k]{a}{k} :: (a : \plus*[\ell \in L]{\ell\colon A_\ell})}{
    (k \in L)}
  \qquad
  \begin{tikzpicture}
    \node (a)
      [draw, label=left:$a$,
       rectangle split, rectangle split parts=2,
       rectangle split empty part width=0.5ex,
       rectangle split empty part height=1.5ex,
       rectangle split horizontal]
      {\nodepart{two}};
  \end{tikzpicture}
  \enspace\:\raisebox{1ex}{${}\rightsquigarrow{}$}\,
  \begin{tikzpicture}
    \node (a)
      [draw, label=left:$a$,
       rectangle split, rectangle split parts=2,
       rectangle split empty part width=0.5ex,
       rectangle split empty part height=1.5ex,
       rectangle split horizontal]
      {\nodepart{two}};
    \node [anchor=base, at=(a.base)] {$k$};
    \node (ak) [above=0.4cm of a.two north, text depth=0pt, overlay] {$a_k$};
    \draw [overlay] (a.two north |- a.east) edge[->] (ak);
  \end{tikzpicture}
\end{equation*}
Symmetrically -- and adhering to the pattern for positive types -- the semi-axiomatic sequent calculus's $\lrule{\lor}$ rule becomes a typing rule for the reading construct $\plusL[\ell \in L]{a}{[x_\ell]{\ell} => P_\ell}$ that branches on the tag that it reads from address $a$.
\begin{equation*}
  \infer[\lrule{\lor}]{\ctx , A_1 \lor A_2 \vdash C}{
    \forallseq{\ell \in \set{1,2}}{\ctx , A_\ell \vdash C}}
  \qquad\qquad
  \infer[\lrule{\plus}]{\ctx , a{:}\plus*[\ell \in L]{\ell\colon A_\ell} \vdash \plusL[\ell \in L]{a}{[x_\ell]{\ell} => P_\ell} :: (c : C)}{
    \forallseq{\ell \in L}{\ctx , x_\ell{:}A_\ell \vdash P_\ell :: (c : C)}}
\end{equation*}
The process $\plusL[\ell \in L]{a}{[x_\ell]{\ell} => P_\ell}$ reads the tag, say $k \in L$, stored at address $a$ and selects the corresponding branch.
The variable $x_k$ is bound to the address that was tagged by $k$, and execution continues according to $P_k$.

\vspace{\baselineskip}
\noindent
\emph{Example.}
At this point, we can consider another simple example.
Booleans can be described with the type $\plus*{\mt{tt}\colon \one , \mt{ff}\colon \one}$, which we abbreviate as $\bool$.
The following process reads the boolean stored at address $a$ and then writes its negation at address $b$.
The diagram shows the process's execution when the tag stored at address $a$ is $\mt{tt}$; the other case is symmetric.
\begin{equation*}
  \begin{lgathered}[b]
    \mathrlap{\bool \defd \plus*{\mt{tt}\colon \one , \mt{ff}\colon \one}} \\
    \begin{array}[b]{@{}r@{}l@{}}
      a{:}\bool \vdash {} &
      \mathsf{read}\:a\:(\plusV[x]{\mt{tt}} \Rightarrow \plusA[x]{b}{\mt{ff}}\\[-1ex]
      & \hphantom{\mathsf{read}\:a\:(}\mathllap{{} \mid {}} \plusV[y]{\mt{ff}} \Rightarrow \plusA[y]{b}{\mt{tt}}) :: (b : \bool)
    \end{array}
  \end{lgathered}
  \qquad
  \begin{tikzpicture}
    \node (a)
      [draw, label=left:$a$,
       rectangle split, rectangle split parts=2,
       rectangle split horizontal,
       rectangle split empty part width=0.5ex,
       rectangle split empty part height=1.5ex]
      {$\mt{tt}$\nodepart{two}};
    \node (x) [right=0.4cm of a] {$x\mathrlap{{:}\one}$};
    \draw (a.two north |- a.east) edge[->] (x);

    \node (b)
      [draw, below right=0.25cm and 0cm of a.south west,
       label=left:$b$,
       rectangle split, rectangle split parts=2,
       rectangle split horizontal,
       rectangle split empty part width=0.5ex,
       rectangle split empty part height=1.5ex]
      {\nodepart{two}};
  \end{tikzpicture}
  \enspace\!\!\raisebox{1ex}{${}\rightsquigarrow{}$}\,
  \begin{tikzpicture}
    \node (a)
      [draw, label=left:$a$,
       rectangle split, rectangle split parts=2,
       rectangle split horizontal,
       rectangle split empty part width=0.5ex,
       rectangle split empty part height=1.5ex]
      {$\mt{tt}$\nodepart{two}};
    \node (x) [right=0.4cm of a] {$x{:}\one$};
    \draw (a.two north |- a.east) edge[->] (x);

    \node (b)
      [draw, below right=0.25cm and 0cm of a.south west,
       label=left:$b$,
       rectangle split, rectangle split parts=2,
       rectangle split horizontal,
       rectangle split empty part width=0.5ex,
       rectangle split empty part height=1.5ex]
      {$\mt{ff}$\nodepart{two}};

    \draw (b.two north |- b.east) edge[->, shorten >=-0.25ex] (x);
  \end{tikzpicture}
\end{equation*}

\paragraph*{Functions, type $A_1 \imp A_2$.}
Being a negative proposition, the implication $A_1 \limp A_2$ follows a story dual to that of the positive conjunction $A_1 \land A_2$.
Unlike the positive types, which write with axioms and read with left rules, the function type $A_1 \imp A_2$ that corresponds to implication writes with a right rule and reads with an axiom.

The semi-axiomatic sequent calculus's $\rrule{\limp}$ rule
therefore becomes a typing rule for the process $\impR{a}{x,z}{P}$.
In practice, this process might write (a pointer to) a closure to address $a$, but closures exist at a lower level of abstraction than the Curry--Howard correspondence between the semi-axiomatic sequent calculus and SAX supports.
For this reason, we think of the process $\impR{a}{x,z}{P}$ as writing the \emph{continuation} $\impK{x,z}{P}$.
\begin{equation*}
  \infer[\rrule{\limp}]{\ctx \vdash A_1 \limp A_2}{
    \ctx , A_1 \vdash A_2}
  \qquad\qquad\quad
  \infer[\rrule{\imp}]{\ctx \vdash \impR{a}{x,z}{P} :: (a : A_1 \imp A_2)}{
    \ctx , x{:}A_1 \vdash P :: (z : A_2)}
  \qquad
  \begin{tikzpicture}
    \node (a)
      [draw, label=left:$a$,
       rectangle split, rectangle split parts=1,
       rectangle split horizontal,
       rectangle split empty part width=0.5ex,
       rectangle split empty part height=1.5ex]
      {$\phantom{P\:}\vphantom{\rangle}$};
  \end{tikzpicture}
  \enspace\raisebox{1.25ex}{${}\rightsquigarrow{}$}\,
  \begin{tikzpicture}
    \node (a)
      [draw, label=left:$a$,
       rectangle split, rectangle split parts=1,
       rectangle split horizontal,
       rectangle split empty part width=0.5ex,
       rectangle split empty part height=1.5ex]
      {$\tensorV[x,z] \Rightarrow P$};
  \end{tikzpicture}
\end{equation*}

The semi-axiomatic sequent calculus's $\arule{\limp}$ axiom becomes a typing rule for the construct $\impA{a}{a_1,a_2}$.
\begin{equation*}
  \infer[\arule{\limp}]{A_1 \limp A_2 , A_1 \vdash A_2}{}
  \qquad\qquad\quad
  \infer[\arule{\imp}]{a{:}A_1 \imp A_2 , a_1{:}A_1 \vdash \impA{a}{a_1,a_2} :: (a_2 : A_2)}{}
\end{equation*}
This construct reads the continuation $\impK{x,z}{P}$ stored at address $a$ and passes it two addresses: $a_1$, the address of the function argument of type $A_1$ to which the continuation should be applied; and $a_2$, the address to which the called function should write its result of type $A_2$.
The variables $x$ and $z$ are bound to these addresses, respectively, and execution continues according to $P$.

\paragraph*{Other types.}
Other SAX types and process constructs also emerge from this Curry--Howard reading of the semi-axiomatic sequent calculus.
For example, it is possible to adapt the negative polarity conjunction from intuitionistic logic to a type of lazy records (e.g., like call-by-push-value~\cite{Levy01phd}).
Its SAX typing rules are dual to those for tagged unions~\cite{DeYoung20fscd}; moreover, because function types already exemplify the key aspects of negative types in SAX, we do not present the details of negative conjunction and lazy records in this paper.

Another possible SAX type is $\dn A$, which arises from the downshift of adjoint logic~\cite{Benton94csl,Reed09un,Pruiksma18aun}.
From a purely logical standpoint, $\dn A$ is not especially interesting because $\dn A$ is logically equivalent to $A$.
Neither is $\dn A$
particularly useful in SAX processes: it merely serves to introduce indirections beyond those already present in abundance in SAX.
However, the type $\dn A$ is also present in SNAX and becomes much more useful there, so we postpone further discussion of type $\dn A$, processes $\dnA{a}{b}$ and $\dnL{a}{x}{P}$, and their typing rules and operational semantics to \cref{sec:SNAX}.

\subsection{Adding recursion to SAX}

For most practical examples, recursively defined types and processes are needed.
Recursion in SAX goes beyond a strict Curry--Howard correspondence with the semi-axiomatic sequent calculus, but only in the same way that recursive functional programming goes beyond a strict Curry--Howard correspondence with natural deduction.

Instead of adding an explicit $\mu$, $\mathsf{fold}$, and $\mathsf{unfold}$ operators, we use recursive type definitions
and recursive process definitions.
Recursive type definitions have the form $t \defd A$; we choose to treat them equirecursively so that $t$ and its unfolding, $A$, are indistinguishable.
Under this interpretation, type definitions like $t = t$ would not be sensible, so SAX requires type definitions to be \emph{contractive}: each type name $t$ must (eventually) unfold to a logical type constructor like $\tensor$ or $\imp$.

Recursive process definitions have the form $\fun{\mathit{p}}{(z{:}C)}{(x_1{:}A_1) \dotsm (x_n{:}A_n)} = P$, where the argument $x_1{:}A_1, \dotsc, x_n{:}A_n$ may be read by process~$P$, and $z{:}C$ is the destination to which $P$ will write.
(If the process $P$ diverges, it escapes the obligation to write by postponing that obligation indefinitely.)
Calls to these recursively defined processes are made by the process construct $\callfun{\mathit{p}}{c}{a_1 \dotsb\mkern1.5mu a_n}$.
Its typing rule is the following where $\sig$ is a signature that holds recursive type and process definitions.
(See~\cref{sec:SAX:details} for more details on signatures.)
\begin{equation*}
  \infer[\jrule{CALL}]{a_1{:}A_1 , \dotsc , a_n{:}A_n \vdash_{\sig} \callfun{\mathit{p}}{c}{a_1 \dotsb\mkern1.5mu a_n} :: (c : C)}{
    (\fun{\mathit{p}}{(z{:}C)}{(x_1{:}A_1) \dotsm (x_n{:}A_n)} = P) \in \sig}
\end{equation*}
Operationally, $\callfun{\mathit{p}}{c}{a_1 \dotsb\mkern1.5mu a_n}$ will lookup the definition for $\mathit{p}$ and execution will continue according to the definition's body, substituting addresses for the argument and destination variables.
Accordingly, recursive process definitions are required to be contractive.

\subsection{Extended example: mapping a function across a linked list of booleans}

As an extended example of the SAX type theory, we can consider the recursive type $\boollist$ that describes linked lists of booleans.
(A polymorphic type of linked lists is not currently possible in SAX, but adding parametric polymorphism to SAX is a primary goal of future work.)
The type $\boollist$ is defined as follows.
\begin{equation*}
  \begin{lgathered}
  \boollist \defd \plus*{\mt{nil}\colon \one ,\, \mt{cons}\colon \bool \tensor \boollist}
  \\
  \text{\emph{(a) }}
  \begin{tikzpicture}[baseline=(xs.base)]
    \node (xs)
      [draw, label=left:$\mathit{xs}$,
       rectangle split, rectangle split parts=2,
       rectangle split empty part width=0.5ex,
       rectangle split empty part height=1.5ex,
       rectangle split horizontal]
      {$\mt{nil}$\nodepart{two}};

    \node (u) [right=0.4cm of xs] {$u\smash{{:}\one}$};
    \draw (xs.two north |- xs.east) edge[->] (u);
  \end{tikzpicture}
  \qquad
  \text{\emph{(b) }}
  \begin{tikzpicture}[baseline=(xs.base)]
    \node (xs)
      [draw, label=left:$\mathit{xs}$,
       rectangle split, rectangle split parts=2,
       rectangle split empty part width=0.5ex,
       rectangle split empty part height=1.5ex,
       rectangle split horizontal]
      {$\mt{cons}$\nodepart{two}};

    \node (p) [right=0.4cm of xs, text depth=0pt] {$p$};
    \node (pn)
      [right=0.02cm of p.east,
       draw,
       rectangle split, rectangle split parts=2,
       rectangle split empty part width=0.5ex,
       rectangle split empty part height=1.5ex,
       rectangle split horizontal]
      {\nodepart{two}};
    \draw (xs.two north |- xs.east) edge[->] (p);

    \node (x) [overlay, above=0.4cm of {pn.mid |- pn.north}, text depth=0pt] {$x\mathrlap{{:}\bool}$};
    \draw (pn.base |- pn.east) edge[->] (x);

    \node (xs') [right=0.4cm of pn] {$\mathit{xs}\smash{'{:}\boollist}$};
    \draw (pn.two north |- pn.east) edge[->] (xs');
  \end{tikzpicture}
  \end{lgathered}
\end{equation*}
Specifically, an address $\mathit{xs}$ of type $\boollist$ stores either:
\begin{enumerate*}[label=\emph{(\alph*)}]
\item the tag $\mt{nil}$ and an address $u$ of type $\one$; or
\item the tag $\mt{cons}$ and an address $p$ of type $\bool \tensor \boollist$, which itself stores a pair of addresses of types $\bool$ and $\boollist$, respectively.
\end{enumerate*}
Aside from the high degree of indirection and the use of tags to replace null pointers, this is a fairly recognizable representation of a linked list of booleans.

A $\mathit{map}$ function for mapping a unary boolean function $f$ across a linked list $\mathit{xs}$ of booleans and writing the resulting list to destination $\mathit{ys}$ is given by the following recursive definition.
\begin{equation*}
    \begin{array}[b]{@{}l@{\hspace{2em}}l@{}}
      \multicolumn{2}{l}{\fun{\mathit{map}}{(\mathit{ys} : \boollist)}{(f : \bool \imp \bool)\:(\mathit{xs} : \boollist)} = {}}\\[-1ex]
      \quad\mathsf{read}\:\mathit{xs}\:(                        & \mbox{\itshape\% read and branch on tag at $\mathit{xs}$}\\[-1ex]
      \quad\hphantom{\mid {}} \plusV[u]{\mt{nil}} \Rightarrow   
                   \id{\mathit{ys}}{\mathit{xs}}                & \mbox{\itshape\% copy (empty) input list $\mathit{xs}$ to destination $\mathit{ys}$} \\[-1ex]
      \quad\mid \plusV[p]{\mt{cons}} \Rightarrow
            \tensorL[x,\mathit{xs}']{p}{                        & \mbox{\itshape\% read pair at $p$} \\[-1ex]
      \hphantom{\quad\mid \plusV[p]{\mt{cons}} \Rightarrow {}}
            \cut{y}{\impA{f}{x,y}}{                             & \mbox{\itshape\% allocate $y$ and call $f$ with destination $y$}\\[-1ex]
      \hphantom{\quad\mid \plusV[p]{\mt{cons}} \Rightarrow {}}
            \cut{\mathit{ys}'}{
              \callfun{\mathit{map}}{\mathit{ys}'}{f\:\mathit{xs}'}}{ & \mbox{\itshape\% allocate $\mathit{ys}'$ and call $\mathit{map}$ recursively with dest.\ $\mathit{ys}'$}\\[-1ex]
      \hphantom{\quad\mid \plusV[p]{\mt{cons}} \Rightarrow {}}
            \cut{q}{\tensorA[y,\mathit{ys}']{q}}{              & \mbox{\itshape\% allocate $q$ and write pair of $y$ and $\mathit{ys}'$}\\[-1ex]
      \hphantom{\quad\mid \plusV[p]{\mt{cons}} \Rightarrow {}}
            \plusA[q]{\mathit{ys}}{\mt{cons}}}}}})             & \mbox{\itshape\% write the tagged pair to the original destination $\mathit{ys}$}
    \end{array}
\end{equation*}

\subsection{Details of the SAX type theory}\label{sec:SAX:details}

The types in SAX are as described above; 
recursive type definitions $t \defd A$ and recursive process definitions $\fun{\mathit{p}}{z}{x_1 \dotsb\mkern1.5mu x_n} = P$ are collected in signatures $\sig$.
A signature indexes the SAX typing judgment: $\ctx \vdash_{\sig} P :: (a : A)$.
However, because none of the typing rules affect the signature, it is frequently elided.
\begin{alignat*}{2}
  \text{\emph{Types}} &\quad&
    A , B , C &\Coloneqq A \tensor B \mid \one \mid \plus*[\ell \in L]{\ell\colon A_\ell} \mid \dn A
                    \mid A \imp B
                    \mid t
  \\
  \text{\emph{Signatures}} &&
    \sig &\Coloneqq \sige* \mid \sig , t \mathbin{\defd} A \mid \fun{\mathit{p}}{z}{x_1 \dotsb\mkern1.5mu x_n} \mathbin{=} P
\end{alignat*}

SAX processes $P$ and $Q$ have one of five forms: allocations, copies, writes, reads, and calls.
\begin{alignat*}{2}
  \text{\emph{Processes}} &\quad&
    P , Q  &\Coloneqq \cut{x}{P}{Q} \mid \id{a}{b}
                 \mid \writeV{a}{S} \mid \readV{a}{T}
                 \mid \callfun{\mathit{p}}{d}{a_1 \dotsb\mkern1.5mu a_n}
  \\
  \text{\emph{Storables}} &&
    S &\Coloneqq \tensorV[a_1, a_2] \mid \oneV \mid \plusV[a]{k} \mid \dnV{a}
            \mid \impK{x,z}{P}
  \\
  \text{\emph{Co-storables}} &&
    T &\Coloneqq \tensorK[x_1, x_2]{P} \mid \oneK{P} \mid \plusK[\ell \in L]{[x_\ell]{\ell} => P_\ell} \mid \dnK{x}{P}
            \mid \impV{a_1,a_2}
\end{alignat*}
Writes rely on a syntactic category of storables $S$, which are the data that may be written into a memory cell.
There is one storable for each type constructor: pairs of addresses, $\tensorV[a_1,a_2]$; unit value, $\oneV$; tagged address, $\plusV[a]{k}$; pointers, $\dnV{a}$; and function continuations, $\impK{x,z}{P}$.
Reads rely on a syntactic category of co-storables $T$ that are dual to storables.
Once again, there is one form of co-storable for each type constructor: continuations for pairs, unit values, tagged unions, and pointers; and pairs of addresses to be passed to function continuations.
We will not repeat the process typing rules here.

The operational semantics of the SAX type theory is based on multiset rewriting, using three semantic objects:
$\thread{a}{P}$ denotes a running process $P$ that must write to address $a$;
$\cell{a}{}$ denotes an empty memory cell at address $a$, \ie, one that has been allocated but not yet written;
and $\cell{a}{S}$ denotes a filled memory cell at address $a$, \ie, one that has been allocated and now stores $S$.
The `$\bang$' is notation borrowed from linear logic and denotes that filled cells implicitly persist across rewriting steps.

A configuration $\cnf$ is a collection of filled cells and threads with corresponding empty cells:
\begin{alignat*}{2}
  \text{\emph{Configurations}} &\quad&
    \cnf &\Coloneqq \cnfe* \mid \cnf_1 \cc \cnf_2
               \mid \thread{a}{P} \cc \cell{a}{}
               \mid \cell{a}{S}
  \\
  \text{\emph{Configuration contexts}} &&
    \cctx &\Coloneqq \cctxe* \mid \cctx , a{:}A
\end{alignat*}
We say that a configuration $\cnf$ is \emph{final} if it consists only of filled cells $\cell{a}{S}$.
Configurations are typed with a judgment $\cctx \vDash \cnf :: \cctx'$.
Configuration contexts $\cctx$ have the same syntactic structure as contexts $\ctx$, but configuration contexts $\cctx$ are not subject to contraction; in the judgment $\cctx \vDash \cnf :: \cctx'$, the addresses in $\cctx$ are therefore presumed to be distinct.
Also, in $\cnf$ and $\cctx$, we write $a$, $b$, and $c$ for readability, but they are all fresh runtime addresses $\alpha$ (not variables $x$).
The configuration typing judgment has the rules:
\begin{mathpar}
  \infer[\jrule{EMP}]{\cctx \vDash \cnfe* :: \cctx}{}
  \and
  \infer[\jrule{JOIN}]{\cctx \vDash \cnf_1 \cc \cnf_2 :: \cctx''}{
    \cctx \vDash \cnf_1 :: \cctx' & 
    \cctx' \vDash \cnf_2 :: \cctx''}
  \\
  \infer[\jrule{THREAD}]{\cctx \vDash \thread{a}{P} \cc \cell{a}{} :: (\cctx , a{:}A)}{
    (a \notin \dom{\cctx}) & (\cctx \supseteq \ctx) & \ctx \vdash P :: (a : A)}
  \and
  \infer[\jrule{CELL}]{\cctx \vDash \cell{a}{S} :: (\cctx , a{:}A)}{
    (a \notin \dom{\cctx}) & (\cctx \supseteq \ctx) & \cctx \vdash \writeV{a}{S} :: (a : A)}
\end{mathpar}
Notice that both the $\jrule{THREAD}$ and $\jrule{CELL}$ rules check that the address $a$ is not already present in the domain of $\ctx$, to ensure that each address has a unique type.
Both rules also rely on the static typing judgment for SAX processes.

The concurrent operational semantics is then described by the following multiset rewriting clauses.
\begin{equation*}
  \begin{lgathered}[t]
    \begin{aligned}[t]
      \MoveEqLeft[.75]
      \thread{c}{(\cut{x}{P}{Q})} \qquad\qquad\qquad\qquad\quad\;\,\text{($\alpha$ fresh)} \\
        &\stepsto \thread[\alpha/x]{\alpha}{P} \cc \cell{\alpha}{} \cc \thread[\alpha/x]{c}{Q}
    \end{aligned}
    \\[0.5ex]
    \thread{a}{\id{a}{b}} \cc \cell{a}{} \cc \cell{b}{S}
      \stepsto \cell{a}{S}
    \\[0.5ex]
    \thread{a}{\writeV{a}{S}} \cc \cell{a}{}
      \stepsto \cell{a}{S}
    \\[0.5ex]
    \thread{c}{\readV{a}{T}} \cc \cell{a}{S}
      \stepsto \thread{c}{\pass{S}{T}}
    \\[0.5ex]
    \begin{aligned}[t]
      \MoveEqLeft[.75]
      \thread{c}{\callfun{\mathit{p}}{c}{a_1 \dotsb\mkern1.5mu a_n}} \\
        &\stepsto \thread{c}{[c/z, a_1/x_1, \dotsc, a_n/x_n]P} \\
        &\text{(where $\fun{\mathit{p}}{z}{x_1 \dotsb\mkern1.5mu x_n} = P$)}
    \end{aligned}
  \end{lgathered}
  \qquad
  \begin{lgathered}[t]
    \text{where $\pass{S}{T}$ is given by } \\[0.5ex]\quad
    \begin{alignedat}[t]{2}
      \pass{\tensorV[a_1,a_2]}{\tensorK[x_1,x_2]{P}} &= [a_1/x_1,a_2/x_2]P \\[0.5ex]
      \pass{\oneV}{\oneK{P}} &= P \\[0.5ex]
      \pass{\plusV[a]{k}}{\plusK[\ell \in L]{[x_\ell]{\ell} => P_\ell}} &= [a/x_k]P_k \quad(k \in L) \\[0.5ex]
      \pass{\ptrV{a}}{\ptrK{x}{P}} &= [a/x]P \\[0.5ex]
      \pass{\impK{x,z}{P}}{\impV{a_1,a_2}} &= [a_1/x,a_2/z]P
    \end{alignedat}
  \end{lgathered}
\end{equation*}
Preservation and progress hold for the SAX type theory~\cite{DeYoung20fscd}.
\begin{theorem}[Preservation]
  If $\cctx_0 \vDash \cnf :: \cctx$ and $\cnf \stepsto \cnf'$, then $\cctx_0 \vDash \cnf' :: \cctx'$ for some $\cctx' \supseteq \cctx$.
\end{theorem}

\begin{theorem}[Progress]
  If\/ $\vDash \cnf :: \cctx$, then either $\cnf$ is final or $\cnf \stepsto \cnf'$ for some $\cnf'$.
\end{theorem}

\section{SNAX type theory for data layout}\label{sec:SNAX}

The SAX type theory does not take layout considerations into account in the sense that addresses
remain entirely abstract.
As a concrete example, a single $\mt{cons}$ node in a linked list of booleans
can be thought of as laid out in SAX with three indirections, while a more compact flat layout
would be more memory-efficient:
\begin{equation*}
  \text{\emph{Instead of}} \enspace
  \begin{tikzpicture}[baseline=(c.base)]
    \node (c)
      [draw,
       rectangle split, rectangle split parts=2,
       rectangle split horizontal,
       rectangle split empty part width=0.5ex,
       rectangle split empty part height=1.5ex]
      {$\mt{cons}$\nodepart{two}};

    \node (p)
      [draw, right=0.25cm of c,
       rectangle split, rectangle split parts=2,
       rectangle split horizontal,
       rectangle split empty part width=0.5ex,
       rectangle split empty part height=1.5ex]
      {\nodepart{two}};

    \node (xs')
      [right=0.25cm of p,
       rectangle split, rectangle split parts=1,
       rectangle split empty part width=0.5ex,
       rectangle split empty part height=1.5ex,
       rectangle split horizontal]
      {${\dotso}$};

    \node (b)
      [draw, above right=0.25cm and 0cm of p.north west,
       rectangle split, rectangle split parts=2,
       rectangle split empty part width=0.5ex,
       rectangle split empty part height=1.5ex,
       rectangle split horizontal]
      {$\mt{tt}$\nodepart{two}};

    \node (u)
      [draw, right=0.25cm of b,
       rectangle split, rectangle split parts=1,
       rectangle split empty part width=0.5ex,
       rectangle split empty part height=1.5ex,
       rectangle split horizontal]
      {};
    \node [overlay, at=(u)] {$\oneV$};

    \draw (c.two south |- c.east) edge[->] (p);
    \draw (p.mid |- p.east) edge[->] (p.mid |- b.south);
    \draw (b.two south |- b.east) edge[->] (u);
    \draw (p.two south |- p.east) edge[->] (xs');
  \end{tikzpicture}
  \text{\emph{, the flat layout }}\,
  \begin{tikzpicture}[baseline=(c.base)]
    \node (c)
      [draw,
       rectangle split, rectangle split parts=3,
       rectangle split horizontal,
       rectangle split empty part width=0.5ex,
       rectangle split empty part height=1.5ex]
      {$\mt{cons}$\nodepart{two}$\smash{\mt{tt}}\vphantom{\mt{cons}}$\nodepart{three}};

    \node (xs')
      [right=0.25cm of c,
       rectangle split, rectangle split parts=1,
       rectangle split empty part width=0.5ex,
       rectangle split empty part height=1.5ex,
       rectangle split horizontal]
      {${\dotso}$};

    \draw (c.three south |- c.east) edge[->] (xs');
  \end{tikzpicture}
  \text{\emph{ is likely preferable.}}
\end{equation*}
SAX's rather extreme level of indirection arises from its heavy reliance on the $\jrule{CUT}$ rule.
This suggests that if we want to account for data layout in a SAX-like type theory, an understanding of cut elimination and the structure of cut-free proofs in the semi-axiomatic sequent calculus may provide some insight.

\subsection{Cut elimination in the semi-axiomatic sequent calculus}

Unfortunately, the semi-axiomatic sequent calculus does not immediately satisfy Gentzen-style cut elimination. 
As a counterexample, there is a semi-axiomatic proof of $B , B \limp A_1 , A_2 \vdash A_1 \land A_2$, namely
\begin{equation*}
  \infer[\jrule{CUT}]{B , B \limp A_1 , A_2 \vdash A_1 \land A_2}{
    \infer[\arule{\limp}]{B , B \limp A_1 \vdash A_1}{} &
    \infer[\arule{\land}]{A_1 , A_2 \vdash A_1 \land A_2}{}}
  \,,
\end{equation*}
but there is no \emph{cut-free} proof: the cut that appears here is essential and cannot be eliminated.

However, notice that the above cut has a \emph{subformula property}: the cut formula (here $A_1$) is a proper subformula of one of the conclusion sequent's formulas (here $A_1 \land A_2$).
Moreover, this subformula property derives from the use of the cut formula within the occurrence of the $\arule{\land}$ axiom.
Such formulas that are used by axioms are said to be \emph{eligible} to act as the cut formula in one of these well-behaved cuts, which are called \emph{snips}.
Proof-theoretically, both eligibility and snips are thought of as \emph{properties} of proofs: we can implicitly ignore that a formula is eligible and that a cut is a snip because eligibility is not part of the structure of the proof rules themselves.

Although semi-axiomatic proofs cannot be transformed to be fully cut-free, they can be transformed so that the only remaining cuts are these well-behaved snips.
For example, the above proof can be reformulated using a snip as follows; the eligible formulas are indicated by underlining.
\begin{equation*}
  \infer[\jrule{SNIP}]{\eli{B} , B \limp A_1 , \eli{A_2} \vdash A_1 \land A_2}{
    \infer[\arule{\limp}]{\eli{B} , B \limp A_1 \vdash \eli{A_1}}{} &
    \infer[\arule{\land}]{\eli{A_1} , \eli{A_2} \vdash A_1 \land A_2}{}}
\end{equation*} 
For more on the proof-theoretic view of eligibility as a property of proofs, we refer the reader to~\cite{DeYoung20fscd}.

\subsection{Making eligibility first-class}

As we saw in \cref{sec:SAX}, the SAX type theory is a Curry--Howard interpretation of the semi-axiomatic sequent calculus that ignores eligibility and snips, viewing them as mere technical refinements used in recovering (a modified form of) cut elimination.
However, in this section, we elevate eligibility and snips to be first-class concepts in the proof theory and thereby obtain, by Curry--Howard correspondence, a type theory that gives a clean, logically grounded account of data layout in shared memory concurrency.
We will call the resulting type theory SNAX, short for ``SAX with snips''.

\paragraph*{Eligibility, addresses, contexts, and judgments.}
To make eligibility a structural component of the semi-axiomatic sequent calculus, contexts $\ctx$ now contain ordinary antecedents $A$ and eligible antecedents $\eli{A}$ -- eligibility is no longer a refinement property, but instead intrinsic to an antecedent.
Owing to the subformula structure that underlies eligibility, an eligible antecedent $\eli{A}$ in a proof will correspond to an eligible address  $\eli{a{\cdot}p:A}$, where $p$ is a \emph{projection} and $a \cdot p$ is a new form of address.
For example, we will see shortly that just as $A_1$ is an eligible antecedent within the axiom for $A_1 \land A_2$, so will $a \cdot \pi_1$ be the address of the first component of a pair that itself begins at address $a$.
(We will sometimes elide the $\cdot$ operator in $a \cdot p$, especially when $\pi_1$ and $\pi_2$ are involved.)
\begin{equation*}
  \begin{alignedat}[t]{2}
    \text{\emph{Contexts}} &\quad&
      \ctx &\Coloneqq \ctxe* \mid \ctx , A \mid \ctx , \eli{A}
  \end{alignedat}
  \qquad\qquad
  \begin{alignedat}[t]{2}
    \text{\emph{Contexts}} &\quad&
      \ctx &\Coloneqq \ctxe* \mid \ctx , a{:}A \mid \ctx , \smash{\eli{a{\cdot}p{:}A}}\vphantom{p}
    \\
    \text{\emph{Addresses}} &&
      a , b , c , d &\Coloneqq x \mid \alpha \mid a \cdot p
    \\
    \text{\emph{Projections}} &&
      p &\Coloneqq \pi_1 \mid \pi_2 \mid \smash{\ptagg{\ell}} \vphantom{\ell} \mid p_1 \cdot p_2
  \end{alignedat}
\end{equation*}
We say that \emph{$a$ strictly extends $c$} and write $a \extends c$ whenever $a = c \cdot p$ for some projection $p$;
we also say that \emph{$a$ (weakly) extends $c$} and write $a \wextends c$ whenever either $a = c$ or $a \extends c$.
Note that every address has a variable (either a static $x$ or a runtime $\alpha$) at its head.
This reflects the idea that memory will be allocated in blocks and, at the level of abstraction that SNAX provides, each address exists relative to the base address of the block to which it belongs.

Projections provide a clean, logical description of the \emph{abstract} layout of data.
On the other hand, machine code relies on \emph{concrete} address arithmetic.
To bridge this gap, we envision that, at a lower level of abstraction, projections would be concretized.
One simple concretization $(a)^\star$ underlying our diagrams would be the following:
\begin{equation*}
  \begin{alignedat}[t]{2}
    (a \cdot \smash{\ptagg{k}}\vphantom{k})^\star &= a^\star + 1 \\
    (a \cdot \pi_1)^\star &= a^\star \\
    (a \cdot \pi_2)^\star &= a^\star + \size{A_1} &\enspace& \text{(when $a : A_1 \tensor A_2$)}
  \end{alignedat}
  \quad\!\text{where}\quad
  \begin{alignedat}[t]{3}
    \size{A_1 \tensor A_2} &= \size{A_1} + \size{A_2} &\quad&&
    \size{\dn A} &= 1 \\
    \size{\one} &= 0 &&&
    \size{A_1 \imp A_2} &= 1 \\
    \size{\plus*[\ell \in L]{\ell\colon A_\ell}} &= 1 + {\textstyle \max_{\ell \in L} \,\size{A_\ell}}
  \end{alignedat}
\end{equation*}
However, this is not the only possible concretization.
Practical compilers employ layout optimizations, such as bit-packing to reduce the space needed for the tags of consecutive tagged unions.
By using projections, SNAX is agnostic about the particular concretization chosen and remains flexible.

Aside from the changes to addresses and contexts, sequents and typing judgments look the same as in SAX: $\ctx \vdash A$ and $\ctx \vdash P :: (a : A)$, respectively.
As a general convention, for each typing judgment $\ctx \vdash P :: (a : A)$, we presuppose that $a \nwextends b$ for all $b{:}B \in \ctx$; the typing rules will maintain this invariant.
(This property is also provable for all $\eli{b{:}B} \in \ctx$ [see Lemma~\ref{lem:extends}].)
If we did not have this well-formedness condition, there would incorrectly be two writers to address $a$: one implicitly as part of the writer to $b$, and the other being $P$.

To achieve (its modified form of) cut elimination, the semi-axiomatic sequent calculus also uses eligible succedents.
However, because the SNAX type theory does not model the internal layout of function closures, eligible succedents do not appear in SNAX.
As a consequence, cut elimination (even in modified form) will not hold for SNAX.
Since we are primarily interested in operational aspects, we avoid here the technical complications
required to restore it, since type preservation and progress \emph{do} hold.

\paragraph*{Weakening and contraction.}

To correctly maintain the connection between an eligible antecedent and the axiom from which it derives its eligibility, some care must be taken with weakening and contraction.
Ordinary antecedents and their corresponding addresses are still subject to weakening, but eligible ones may \emph{not} be weakened away (or otherwise, their eligibility would fail to be derived from an axiom).
\begin{gather*}
  \infer[\jrule{W}]{\ctx , A \vdash C}{
    \ctx \vdash C}
  \qquad\qquad\quad
  \infer[\mathrlap{\jrule{W}}]{\ctx , a{:}A \vdash P :: (c : C)}{
    \ctx \vdash P :: (c : C)}
  \\[2ex]
  \text{(no weakening for $\eli{A}$ and $\eli{a{:}A}$)}
\end{gather*}
Contraction for two ordinary antecedents and their corresponding addresses occurs as in SAX.
Furthermore, contracting an eligible antecedent $\eli{A}$ and an ordinary antecedent $A$ together into $\eli{A}$ is permitted: the eligibility of the resulting $\eli{A}$ will still be traceable to an axiom.
But contracting two copies of $\eli{a{:}A}$ into one is not permitted in SNAX.
Such a contraction rule would not be harmful, but neither would it be useful
since no two eligible antecedents can be assigned the same address in a derivable judgment (see Lemma~\ref{lem:twoeli}).
In this way, eligible antecedents $\eli{A}$ and their corresponding addresses $\eli{a{:}A}$ follow an almost linear discipline.
\begin{gather*}
  \infer[\jrule{C}]{\ctx , A \vdash C}{
    \ctx , A , A \vdash C}
  \qquad\qquad\quad
  \infer[\jrule{C}]{\ctx , a{:}A \vdash P :: (c : C)}{
    \ctx , a{:}A , a{:}A \vdash P :: (c : C)}
  \\[2ex]
  \infer[\jrule{C}_{\jrule{E}}]{\ctx , \eli{A} \vdash C}{
    \ctx , \eli{A} , A \vdash C}
  \qquad\qquad\quad
  \infer[\jrule{C}_{\jrule{E}}]{\ctx , \eli{a{:}A} \vdash P :: (c : C)}{
    \ctx , \eli{a{:}A} , a{:}A \vdash P :: (c : C)}
  \\[2ex]
  \text{(no contraction for $\eli{A}, \eli{A}$ and $\eli{a{:}A} , \eli{a{:}A}$)}
\end{gather*}

\paragraph*{Cut, snip, and identity.}
The essential idea of the SNAX type theory is that, once eligibility and snips are first-class, we have the freedom to give different computational interpretations to cuts and snips.
SNAX retains the $\jrule{CUT}$ rule from SAX but also has a distinct $\p{\jrule{SNIP}}$ rule.
The $\jrule{CUT}$ rule and $\cut{x}{P}{Q}$ construct continue to act as in SAX, allocating memory for data of type $A$ and then running $P$ and $Q$ in parallel.
(The type $A$ should be inferred or given, if SNAX is used as a source language, so that allocation can depend on the type.)
\begin{equation*}
  \infer[\jrule{CUT}]{\ctx_1 , \ctx_2 \vdash C}{
    \ctx_1 \vdash A & \ctx_2 , A \vdash C}
  \qquad\qquad\quad
  \infer[\jrule{CUT}]{\ctx_1 , \ctx_2 \vdash \cut{x}{P}{Q} :: (c : C)}{
    \ctx_1 \vdash P :: (x : A) & \ctx_2 , x{:}A \vdash Q :: (c : C) &
    \text{($x$ fresh)}}
\end{equation*}
But the $\p{\jrule{SNIP}}$ rule is different.
Its $\snip{a}{P}{Q}$ construct does not (re-)allocate memory at address $a$ because it is an eligible address and, as such, refers to a location \emph{within} a block already allocated by an earlier $\jrule{CUT}$ rule.
Instead, it simply runs processes $P$ and $Q$ in parallel, with reads at address $a$ that occur in $Q$ blocking until that address has been written to by $P$.%
\footnote{Similar to the situation for SAX, it is possible to give sequential and call-by-need semantics to the SNAX $\jrule{CUT}$ and $\p{\jrule{SNIP}}$ constructs.}
\begin{equation*}
  \infer[\p{\jrule{SNIP}}]{\ctx_1 , \ctx_2 \vdash C}{
    \ctx_1 \vdash A & \ctx_2 , \eli{A} \vdash C}
  \qquad\qquad\quad
  \infer[\p{\jrule{SNIP}}]{\ctx_1 , \ctx_2 \vdash \snip{a}{P}{Q} :: (c : C)}{
    \ctx_1 \vdash P :: (a : A) & \ctx_2 , \eli{a{:}A} \vdash Q :: (c : C)}
\end{equation*}
In the pure proof theory, there would also be a symmetric $\n{\jrule{SNIP}}$ rule for eligible succedents.
However, as previously mentioned, SNAX ignores eligibility in succedents and therefore does not include $\n{\jrule{SNIP}}$.

Eligibility does not enter into the identity rule; it remains as in SAX.
(Note that the $\jrule{ID}$ typing rule is one place where it is essential to have the presupposition on typing judgments $\ctx \vdash P :: (a : A)$ that $a \nwextends b$ for all $b{:}B \in \ctx$.
Without it, the $\jrule{ID}$ rule would not be operationally sensible.)
\begin{equation*}
  \infer[\jrule{ID}]{A \vdash A}{}
  \qquad\qquad\qquad
  \infer[\jrule{ID}]{b{:}A \vdash \id{a}{b} :: (a : A)}{}
\end{equation*}

\paragraph*{Pairs, type $A_1 \tensor A_2$.}
Recall from \cref{sec:SAX} that SAX uses the construct $\tensorA[a_1,a_2]{a}$ to write a pair of addresses $\tensorV[a_1,a_2]$ to address $a$.
Because these addresses point to the data that are conceptually the components of a pair of values, the layout is very indirect.

In contrast, for SNAX, we first observe that the purely logical axiom $\arule{\land}$ has both antecedents $A_1$ and $A_2$ eligible (as denoted by the underlining), because both are proper subformulas that are used in an axiom.
This subformula structure is then reflected in the $\arule{\tensor}$ typing rule by assigning addresses $a \pi_1$ and $a \pi_2$ to $A_1$ and $A_2$, respectively.
Because these addresses are locally calculable from $a$, they are not needed in the process syntax $\tensorA{a}$.
\begin{equation*}
  \infer[\arule{\land}]{\eli{A_1} , \eli{A_2} \vdash A_1 \land A_2}{}
  \qquad\quad\;\;
  \infer[\arule{\tensor}]{\eli{a \pi_1{:}A_1} , \eli{a \pi_2{:}A_2} \vdash \tensorA{a} :: (a : A_1 \tensor A_2)}{}
  \qquad\!\!\!\!
  \begin{tikzpicture}[baseline=(a.south)]
    \node (a)
      [draw, label=left:$a$,
       rectangle split, rectangle split parts=6,
       rectangle split empty part width=0.5ex,
       rectangle split empty part height=1.5ex,
       rectangle split horizontal]
      {\nodepart{two}${\dotso}$\nodepart{three}\nodepart{four}\nodepart{five}${\dotso}$\nodepart{six}};
    \node (ap1) [overlay, below right=2pt of a.south west, inner xsep=0pt]
      {$a\pi_1$};
    \node (ap2) [overlay, below right=2pt of a.three split south, inner xsep=0pt]
      {$a\pi_2$};
    \node [overlay, above right=0pt of a.north west, inner xsep=0pt]
      {$\overbrace{\qquad\qquad\enspace}^{\smash{\text{\pbox[b]{4em}{data of\\[-1.5ex]type $A_1$}}}}$};
    \node [overlay, above left=0pt of a.north east, inner xsep=0pt]
      {$\overbrace{\qquad\qquad\enspace}^{\smash{\text{\pbox[b]{4em}{data of\\[-1.5ex]type $A_2$}}}}$};
    \path (ap1.mid) -- (ap2.mid);
    \path ([yshift=1.5ex]a.north west) -- ([yshift=1.5ex]a.north east);
  \end{tikzpicture}
\end{equation*}

The above picture depicts a particular flat layout for SNAX pairs in which $a \pi_1$ precedes $a \pi_2$.
At a formal level, SNAX abstracts away from these particulars: SNAX requires only that $a \pi_1$ and $a \pi_2$ are calculable from $a : A_1 \tensor A_2$ and that $a \pi_1 \cdot p_1 \neq a \pi_2 \cdot p_2$ for all projections $p_1$ and $p_2$.
For example, as far as SNAX is concerned, an equally correct picture could have $a \pi_2$ preceding $a \pi_1$.

SNAX also tweaks the construct for reading an address $a$ of type $A_1 \tensor A_2$.
Instead of binding variables with $\tensorL[x_1,x_2]{a}{P}$, we now use $\tensorL{a}{P}$.
Bound variables are no longer needed because we know that a SNAX pair at address $a$ will always have its components located at the relative addresses $a \pi_1$ and $a \pi_2$.
Logically, though, the static typing rule for reading at type $A_1 \tensor A_2$ is still derived from the semi-axiomatic sequent calculus's $\lrule{\land}$ rule.
\begin{equation*}
  \infer[\lrule{\land}]{\ctx , \deli{A_1 \land A_2} \vdash C}{
    \ctx , A_1 , A_2 \vdash C}
  \qquad\qquad
  \infer[\lrule{\tensor}]{\ctx , \deli{a{:}A_1 \tensor A_2} \vdash \tensorL{a}{P} :: (c : C)}{
    \ctx , a \pi_1{:}A_1 , a \pi_2{:}A_2 \vdash P :: (c : C)}
\end{equation*}
The $\lrule{\tensor}$ rule demonstrates that, while eligible antecedents always correspond to addresses that are projections $a{\cdot}p$, the converse is not true: not all address projections correspond to eligible antecedents.
Here, although $a \pi_1{:}A_1$ and $a \pi_2{:}A_2$ appear in the premise of the $\lrule{\tensor}$ rule, the antecedents $A_1$ and $A_2$ are \emph{not} eligible in the premise of the $\lrule{\land}$ rule.

(In common layouts we considered, such as the one depicted in the above diagram, the processes $\tensorA{a}$ and $\tensorL{a}{P}$ do not actually write nor read runtime information, their names notwithstanding.
It would be possible to consider erasing $\tensorA{a}$ from the syntax and reducing $\tensorL{a}{P}$ to $P$.
We do not do so because it diverges from the logical foundations and complicates type checking.)

\vspace{\baselineskip}
\noindent
\emph{Example.}
Because SNAX tracks eligibility explicitly, the structure of the proof of commutativity of conjunction changes: $\p{\jrule{SNIP}}$ and $\jrule{ID}$ rules are needed to mediate the ordinary $A_1$ and $A_2$ of the $\lrule{\land}$ rule's premise and the eligible $\eli{A_1}$ and $\eli{A_2}$ of the $\arule{\land}$ axiom.
The corresponding SNAX process involves address projections and explicit copying of data so that the flat layout of pairs is respected.
\begin{equation*}
  \infer[\lrule{\land}]{A_1 \land A_2 \vdash A_2 \land A_1}{
    \infer[\p{\jrule{SNIP}}]{A_1 , A_2 \vdash A_2 \land A_1}{
      \infer[\jrule{ID}]{A_2 \vdash A_2}{} &
      \infer[\p{\jrule{SNIP}}]{\eli{A_2} , A_1 \vdash A_2 \land A_1}{
        \infer[\jrule{ID}]{A_1 \vdash A_1}{} &
        \infer[\arule{\land}]{\eli{A_2} , \eli{A_1} \vdash A_2 \land A_1}{}}}}
  \qquad\quad
  \begin{array}[b]{@{}l@{}l@{}l@{}}
    p : A_1 \tensor A_2 \vdash {} & \tensorL{p}{\\[-1ex]
    & \snip{q \pi_1}{\id{(q \pi_1)}{(p \pi_2)}\,}{\\[-1ex]
    & \snip{q \pi_2}{\id{(q \pi_2)}{(p \pi_1)}\,}{\\[-1ex]
    & \tensorA{q}}}} & {} :: (q : A_2 \tensor A_1)
  \end{array}
\end{equation*}

\paragraph*{Unit, type $\one$.}
The rules for $\top$ in the semi-axiomatic sequent calculus do not involve eligibility -- after all, $\top$ has no proper subformula -- and so the SNAX process constructs and typing rules involving $\one$ are as in SAX.
They are repeated here for convenience.
\begin{equation*}
  \infer[\arule{\one}]{\ctxe \vdash \oneA{a} :: (a : \one)}{}
  \qquad\qquad\quad
  \infer[\lrule{\one}]{\ctx , a{:}\one \vdash \oneL{a}{P} :: (c : C)}{
    \ctx \vdash P :: (c : C)}
\end{equation*}
However, there is one operational difference:
Now that SNAX provides an abstraction that supports conceptually flat layouts of pairs, we can think of data of type $\one$ as taking no space at runtime.
This proves useful in some of the examples that will follow.

\paragraph*{Tagged unions, type $\plus*[\ell \in L]{\ell\colon A_\ell}$.}
Disjunction still corresponds to a labeled sum type, $\plus*[\ell \in L]{\ell\colon A_\ell}$, for tagged unions, as in SAX.
However, instead of the indirect layout of SAX, the presence of an eligible antecedent in the $\arule{\lor}_k$ rule suggests a flat layout for tagged unions in which the tag and the underlying data are laid out side-by-side.
In SNAX, this is abstracted using a new form of address projection, $a \cdot \smash{\ptagg{k}} \vphantom{k}$, where $k$ is the tag.

Instead of SAX's $\plusA[a_k]{a}{k}$ for writing a tagged address, SNAX uses the construct $\plusA{a}{k}$.
This can be seen as fixing address $a_k$ to be $a{\cdot}\smash{\ptagg{k}} \vphantom{k}$ and then eliding it because it is calculable from the address $a$ and tag $k$.
This process is typed by the $\arule{\plus}_k$ rule, which corresponds to the $\arule{\lor}_k$ axiom for disjunction.
Following the pattern seen for pairs, the eligible antecedent $\eli{A_k}$ in the $\arule{\lor}_k$ axiom corresponds to the eligible $\eli{a{\cdot}\smash{\ptagg{k}}{:}A_k} \vphantom{k}$ in the $\arule{\plus}_k$ typing rule.
\begin{equation*}
  \infer[\arule{\lor}_k]{\eli{A_k} \vdash A_1 \lor A_2}{
    (k \in \set{1,2})}
  \qquad\qquad\quad
  \infer[\arule{\plus}_k]{\eli{a{\cdot}\ptagg{k}{:}A_k} \vdash \plusA{a}{k} :: (a : \plus*[\ell \in L]{\ell\colon A_\ell})}{
    (k \in L)}
  \qquad\!\!
  \begin{lgathered}[b]
    \vspace{-2\baselineskip}\\
  \begin{tikzpicture}[baseline=(a.south)]
    \node (a)
      [overlay, draw, label=left:$a$,
       rectangle split, rectangle split parts=4,
       rectangle split empty part width=0.5ex,
       rectangle split empty part height=1.5ex,
       rectangle split horizontal]
      {\nodepart{two}\nodepart{three}${\dotso}$\nodepart{four}};
  \end{tikzpicture}
  \\[-1ex]\qquad\raisebox{1ex}{${}\rightsquigarrow{}$}\!
  \begin{tikzpicture}[baseline=(a.south)]
    \node (a)
      [draw, label=left:$a$,
       rectangle split, rectangle split parts=4,
       rectangle split empty part width=0.5ex,
       rectangle split empty part height=1.5ex,
       rectangle split horizontal]
      {\nodepart{two}\nodepart{three}${\dotso}$\nodepart{four}};
    \node [anchor=base, at=(a.base)] {$k$};
    \node (ak) [overlay, below=2pt of {a.text split |- a.south},
                inner xsep=0pt, anchor=north west] {$a{\cdot}\smash{\ptagg{k}}\vphantom{k}$};
    \node (Ak) [above=0pt of a.north east,
                inner xsep=0pt, anchor=south east] {$\overbrace{\qquad\qquad\enspace}^{\mathclap{\text{data of type $A_k$}}}$};
                \path (ak.north) -- (a.south);
  \end{tikzpicture}
  \end{lgathered}
\end{equation*}

The SNAX construct for reading an address $a$ of type $\plus*[\ell \in L]{\ell\colon A_\ell}$ also differs slightly from its SAX counterpart.
Once again, instead of binding variables $x_\ell$ in the branches of $\plusL[\ell \in L]{a}{[x_\ell]{\ell} => P_\ell}$, we take advantage of knowing that a tag $\ell$ at address $a$ will always have its underlying data located at $a{\cdot}\smash{\ptagg{\ell}} \vphantom{\ell}$ and use the construct $\plusL[\ell \in L]{a}{{\ell} => P_\ell}$.
Logically, the typing rule is still derived from the semi-axiomatic sequent calculus's $\lrule{\lor}$ rule.
All of this follows the pattern seen for the type $A_1 \tensor A_2$.
\begin{equation*}
  \infer[\lrule{\lor}]{\ctx , \deli{A_1 \lor A_2} \vdash C}{
    \forallseq{\ell \in \set{1,2}}{\ctx , A_\ell \vdash C}}
  \qquad\qquad\quad
  \infer[\lrule{\plus}]{\ctx , \deli{a{:}\plus*[\ell \in L]{\ell\colon A_\ell}} \vdash \plusL[\ell \in L]{a}{{\ell} => P_\ell} :: (c : C)}{
    \forallseq{\ell \in L}{\ctx , a {\cdot} \ptagg{\ell}{:}A_\ell \vdash P_\ell :: (c : C)}}
\end{equation*}

\noindent
\emph{Example.}
We can now revisit booleans of type $\bool \defd \plus*{\mt{tt}\colon \one , \mt{ff}\colon \one}$ in the context of SNAX.
Each address $a{:}\bool$ stores only a tag, $\mt{tt}$ or $\mt{ff}$, and no space needs to be reserved for $a{\cdot}\ptag{\mt{tt}}:\one$ or $a{\cdot}\ptag{\mt{ff}}:\one$.
A process for reading a boolean at address $a$ and writing its negation to address $b$ is as follows; its execution in the case that $a$ holds tag $\mt{tt}$ is shown.
\begin{equation*}
  \begin{lgathered}[b]
    \mathrlap{\bool \defd \plus*{\mt{tt}\colon \one , \mt{ff}\colon \one}} \\
    \begin{array}[b]{@{}r@{}l@{}}
      a{:}\bool \vdash {} &
      \mathsf{read}\:a\:(\plusV{\mt{tt}} \Rightarrow \snip{b \cdot \ptag{\mt{ff}}}{\oneA{(b \cdot \ptag{\mt{ff}})}}{\plusA{b}{\mt{ff}}}\\[-1ex]
      & \hphantom{\mathsf{read}\:a\:(}\mathllap{{} \mid {}} \plusV{\mt{ff}} \Rightarrow \snip{b \cdot \ptag{\mt{tt}}}{\oneA{(b \cdot \ptag{\mt{tt}})}}{\plusA{b}{\mt{tt}}}) :: (b : \bool)
    \end{array}
  \end{lgathered}
  \qquad
  \begin{tikzpicture}[baseline=(a.south)]
    \node (a)
      [draw, label=left:$a$,
       rectangle split, rectangle split parts=1,
       rectangle split horizontal,
       rectangle split empty part width=0.5ex,
       rectangle split empty part height=1.5ex]
      {$\mt{tt}\vphantom{\mt{ff}}$};

    \node (b)
      [draw, right=0.65cm of a,
       label=left:$\vphantom{a}\smash{b}$,
       rectangle split, rectangle split parts=1,
       rectangle split horizontal,
       rectangle split empty part width=0.5ex,
       rectangle split empty part height=1.5ex]
      {$\phantom{\mt{f}}\,$};
  \end{tikzpicture}
  \enspace\raisebox{1ex}{${}\rightsquigarrow{}$}\,
  \begin{tikzpicture}[baseline=(a.south)]
    \node (a)
      [draw, label=left:$a$,
       rectangle split, rectangle split parts=1,
       rectangle split horizontal,
       rectangle split empty part width=0.5ex,
       rectangle split empty part height=1.5ex]
      {$\mt{tt}\vphantom{\mt{ff}}$};

    \node (b)
      [draw, right=0.65cm of a,
       label=left:$\smash{b}\vphantom{a}$,
       rectangle split, rectangle split parts=1,
       rectangle split horizontal,
       rectangle split empty part width=0.5ex,
       rectangle split empty part height=1.5ex]
      {$\mt{ff}$};
  \end{tikzpicture}
\end{equation*}

\paragraph*{Pointers, type $\dn A$.}
The semi-axiomatic sequent calculus can include the shift proposition $\dn A$ from adjoint logic~\cite{Benton94csl,Reed09un,Pruiksma18aun}.
Because the semi-axiomatic sequent calculus consists of a single adjoint layer, $A$ and $\dn A$ are logically equivalent.
From a provability perspective, this makes $\dn A$ uninteresting, but the corresponding type, which we also write as $\dn A$, nevertheless has computational significance.\footnote{In future work, we wish to extend the semi-axiomatic sequent calculus and its correspondence with SNAX to have several adjoint layers, to support both linear and persistent data, for example.  In such a system, $\dn A$ would have logical force, as $A$ and $\dn A$ would not be logically equivalent in general.}

As a proposition of positive polarity, $\dn A$ follows the pattern of having its axiom correspond to a typing rule for writes at that type.
\begin{equation*}
  \infer[\arule{\dn}]{\eli{A} \vdash \dn A}{}
  \qquad\qquad\quad
  \infer[\arule{\dn}]{b{:}A \vdash \dnA{a}{b} :: (a : \dn A)}{}
  \qquad
  \begin{tikzpicture}[baseline=(a.base)]
    \node (a)
      [draw, label=left:$a$,
       rectangle split, rectangle split parts=1,
       rectangle split horizontal,
       rectangle split empty part width=0.5ex,
       rectangle split empty part height=1.5ex]
      {\nodepart{two}};
    \node (b) [overlay, above=0.4cm of a.north, text depth=0pt] {$b$};
  \end{tikzpicture}
  \enspace\:\raisebox{1ex}{${}\rightsquigarrow{}$}\,
  \begin{tikzpicture}[baseline=(a.base)]
    \node (a)
      [draw, label=left:$a$,
       rectangle split, rectangle split parts=1,
       rectangle split horizontal,
       rectangle split empty part width=0.5ex,
       rectangle split empty part height=1.5ex]
      {\nodepart{two}};
    \node (b) [overlay, above=0.4cm of a.north, text depth=0pt] {$b$};
    \draw [overlay] (a.north |- a.east) edge[->] (b);
  \end{tikzpicture}
\end{equation*}
From a computational standpoint, we can interpret the structure of the $\arule{\dn}$ axiom as writing an address of type $A$ into an address of type $\dn A$.
In other words, $\dn A$ is the type of pointers to data of type $A$.

In the proof theory, the $\arule{\dn}$ axiom must use an eligible antecedent $\eli{A}$ in order for cut elimination to hold.
However, requiring the $\arule{\dn}$ \emph{typing rule} to use an eligible address would be far too restrictive computationally -- it would force a pointer at address $a$ to point to only a specific projection of $a$, say $a{\cdot}\dn$.
We certainly want pointers to arbitrary addresses, so we allow the $\arule{\dn}$ typing rule to use an ordinary $b{:}A$.
(The $\arule{\dn}$ typing rule is another place the well-formedness condition on typing judgments is essential.)
 
Again following the pattern for positive propositions, the $\lrule{\dn}$ rule corresponds to the typing rule for a construct for reading at type $\dn A$, namely $\dnL{a}{x}{P}$.
\begin{equation*}
  \infer[\lrule{\dn}]{\ctx , \deli{\dn A} \vdash C}{
    \ctx , A \vdash C}
  \qquad\qquad\quad
  \infer[\lrule{\dn}]{\ctx , \deli{a{:}\dn A} \vdash \dnL{a}{x}{P} :: (c : C)}{
    \ctx , x{:}A \vdash P :: (c : C)}
\end{equation*}
Unlike the constructs for reading pairs and tagged values, $\dnL{a}{x}{P}$ has a bound variable and no projection.
At runtime, the address stored at address $a$ is read; then the variable $x$ is bound to that address, and execution continues according to process $P$.
Using a variable, not a projection, in this rule is necessary because the pointer may refer to an arbitrary location, not just to one calculable from $a$.

\vspace{\baselineskip}
\noindent
\emph{Example.}
By judiciously inserting or removing $\dn$ shifts within a type, different layouts can be effected.
As an example, we can revisit the types $\bool$ and $\boollist$.
Having only a $\dn$ shift in front of the recursive call to $\boollist$ yields a flat layout, with pointer indirection only between list elements:
\begin{equation*}
  \begin{lgathered}[b]
    \bool \defd \plus*{\mt{tt}\colon \one , \mt{ff}\colon \one}
    \\
    \boollist \defd \plus*{\mt{nil}\colon \one , \mt{cons}\colon \bool \tensor \dn \boollist}
  \end{lgathered}
  \qquad
  \begin{tikzpicture}[baseline=(xs.base)]
    \node (xs)
      [draw, overlay,
       rectangle split, rectangle split parts=3,
       rectangle split horizontal,
       rectangle split empty part width=0.5ex,
       rectangle split empty part height=1.5ex]
      {$\mt{cons}$\nodepart{two}$\smash{\mt{tt}}\vphantom{\mt{cons}}$\nodepart{three}};
    \path (xs.west) -- (xs.north west);

    \node (xs')
      [overlay, right=0.25cm of xs,
       rectangle split, rectangle split parts=1,
       rectangle split empty part width=0.5ex,
       rectangle split empty part height=1.5ex,
       rectangle split horizontal]
      {${\dotso}$};

    \draw (xs.three south |- xs.east) edge[->] (xs');
  \end{tikzpicture}
\end{equation*}
At the other extreme, having a $\dn$ shift in front of each type constructor effects an indirection-heavy, SAX-like layout within SNAX.
Each $\dn$ shift introduces a pointer into the layout.
\begin{equation*}
  \begin{lgathered}[b]
    \bool \defd \plus*{\mt{tt}\colon \dn \one , \mt{ff}\colon \dn \one}
    \\
    \boollist \defd \plus*{\mt{nil}\colon \dn \one , \mt{cons}\colon \dn (\dn \bool \tensor \dn \boollist)}
  \end{lgathered}
  \qquad
  \begin{tikzpicture}[baseline=(c.base)]
    \node (c)
      [draw, overlay,
       rectangle split, rectangle split parts=2,
       rectangle split horizontal,
       rectangle split empty part width=0.5ex,
       rectangle split empty part height=1.5ex]
      {$\mt{cons}$\nodepart{two}};
    \path (c.west) -- (c.north west);

    \node (p)
      [draw, overlay, right=0.25cm of c,
       rectangle split, rectangle split parts=2,
       rectangle split horizontal,
       rectangle split empty part width=0.5ex,
       rectangle split empty part height=1.5ex]
      {\nodepart{two}};

    \node (xs')
      [overlay, right=0.25cm of p, 
       rectangle split, rectangle split parts=1,
       rectangle split empty part width=0.5ex,
       rectangle split empty part height=1.5ex,
       rectangle split horizontal]
      {${\dotso}$};

    \node (b)
      [draw, overlay, above right=0.25cm and 0cm of p.north west,
       rectangle split, rectangle split parts=2,
       rectangle split empty part width=0.5ex,
       rectangle split empty part height=1.5ex,
       rectangle split horizontal]
      {$\mt{tt}$\nodepart{two}};

    \node (u)
      [overlay, right=0.25cm of b,
       rectangle split, rectangle split parts=1,
       rectangle split empty part width=0.5ex,
       rectangle split empty part height=1.5ex,
       rectangle split horizontal,
       inner xsep=0pt, outer xsep=0pt]
      {$\,\!$};
    \draw [overlay] (u.north west) -- (u.north) -- (u.south) -- (u.south west) -- cycle;

    \draw (c.two south |- c.east) edge[->] (p);
    \draw (p.mid |- p.east) edge[->] (p.mid |- b.south);
    \draw [overlay] (b.two south |- b.east) edge[->] (u);
    \draw (p.two south |- p.east) edge[->] (xs');
  \end{tikzpicture}
\end{equation*}
Layouts with intermediate degrees of indirection can be achieved by using fewer $\dn$ shifts.

\paragraph*{Functions, type $A_1 \imp A_2$.}
Unlike the data of positive types such as $A_1 \tensor A_2$, we will not model the internal layout of function closures.
As previously mentioned, we therefore ignore the eligibility that appears in the semi-axiomatic sequent calculus's $\arule{\limp}$ axiom.
For this reason, SNAX directly inherits the process constructs and static typing rules for $A_1 \imp A_2$ from SAX.
Operationally, they behave as before.
(Although we do not present the details in this paper, it is also possible to adapt negative conjunction from intuitionistic logic to the SNAX type theory as a lazy record type.)
\begin{gather*}
  \infer[\rrule{\limp}]{\ctx \vdash A_1 \limp A_2}{
    \ctx , A_1 \vdash A_2}
  \qquad\quad
  \infer[\rrule{\imp}]{\ctx \vdash \impR{a}{x,z}{P} :: (a : A_1 \imp A_2)}{
    \ctx , x{:}A_1 \vdash P :: (z : A_2)}
  \\[2ex]
  \infer[\arule{\limp}]{A_1 \limp A_2 , \eli{A_1} \vdash \eli{A_2}}{}
  \qquad\quad
  \infer[\arule{\imp}]{a{:}A_1 \imp A_2 , a_1{:}A_1 \vdash \impA{a}{a_1,a_2} :: (a_2 : A_2)}{}
\end{gather*}

\subsection{Adding recursion to SNAX}

Recursion is added to SNAX in the same way as for SAX: We use recursive type and process definitions, $t = A$ and $\fun{\mathit{p}}{(z{:}C)}{(x_1{:}A_1)\dotsm(x_n{:}A_n)} = P$, respectively.
As in SAX, the SNAX type theory requires that these definitions are contractive.
In addition, SNAX requires that all recursion in type definitions be \emph{guarded} by a $\dn$ shift or a negative type constructor (only $\imp$ in this paper), as in $\boollist = \plus*{\mt{nil}\colon \one , \mt{cons}\colon \bool \tensor \dn \boollist}$, for example.
The unguarded type $\boollist = \plus*{\mt{nil}\colon \one , \mt{cons}\colon \bool \tensor \boollist}$ is forbidden because storing a value of type $\boollist$ according to this unguarded definition would require an  unbounded amount of space.

\subsection{Extended example: Mapping a function across a linked list of booleans}

We can revisit the example of mapping a function across a list of booleans in SNAX.
Here we choose to use the flattest of the layouts for lists of booleans, which corresponds to a type definition for $\boollist$ that uses only the $\dn$ shift necessary to guard the recursion.
 A common idiom is for both arguments and destinations of processes to be
addresses (that is, pointers), a small departure from similar code in SAX.
\begin{equation*}
  \!\begin{lgathered}
    \bool \defd \plus*{\mt{tt}\colon \one ,\, \mt{ff}\colon \one}
    \\
    \boollist \defd \plus*{\mt{nil}\colon \one ,\, \mt{cons}\colon \bool \tensor \dn \boollist}
    \\
    \begin{array}[b]{@{}l@{\hspace{1em}}l@{}}
      \fun{\mathit{map}}{(\mathit{ysp} : \dn \boollist)}{(f : \bool \imp \bool)\:(\mathit{xsp} : \dn \boollist)} = {}\\[-1ex]
      \quad\mathsf{read}\:\mathit{xsp}\:(
             \dnV{\mathit{xs}} \Rightarrow {} & \text{\itshape\% read address xs from xsp} \\[-1ex]
      \quad\mathsf{read}\:\mathit{xs}\:( & \text{\itshape\% read and branch on tag at xs} \\[-1ex]
      \quad\hphantom{\mid{}}\plusV{\mt{nil}} \Rightarrow
                   \id{\mathit{ysp}}{\mathit{xsp}} & \text{\itshape\% copy input list pointer xsp to dest.\ ysp} \\[-1ex]\quad
           \mid {} \plusV{\mt{cons}} \Rightarrow
                            \tensorL{(\mathit{xs} \cdot \ptag{cons})}{ & \text{\itshape\% read the pair at $\mathit{xs} \cdot \ptag{cons}$} \\[-1ex]\hphantom{\quad\mid\plusV{\mt{cons}} \Rightarrow {}}
                            \cut{\mathit{ys}}{(
                            \snip{\mathit{ys} \cdot \ptag{cons} \cdot \pi_1}{\impA{f}{\mathit{xs} \cdot \ptag{cons} \cdot \pi_1,\mathit{ys} \cdot \ptag{cons} \cdot \pi_1}\,}{& \text{\itshape\% alloc.\ ys; call f with dest.\ $\mathit{ys} \cdot \ptag{cons} \cdot \pi_1$}\\[-1ex]\hphantom{\quad\mid\plusV{\mt{cons}} \Rightarrow \mathit{ys} \shortleftarrow (}
                            \snip{\mathit{ys} \cdot \ptag{cons} \cdot \pi_2}{
                            \callfun{\mathit{map}}{(\mathit{ys} \cdot \ptag{cons} \cdot \pi_2)}{f\:(\mathit{xs} \cdot \ptag{cons} \cdot \pi_2)}\,}{& \text{\itshape \% call map with dest. $\mathit{ys} \cdot \ptag{cons} \cdot \pi_2$}\\[-1ex]\hphantom{\quad\mid\plusV{\mt{cons}} \Rightarrow \mathit{ys} \shortleftarrow (}
                            \snip{\mathit{ys} \cdot \ptag{cons}}{\tensorA{(\mathit{ys} \cdot \ptag{cons})}\,}{& \text{\itshape\% ``write'' pair to $\mathit{ys} \cdot \ptag{cons}$}\\[-1ex]\hphantom{\quad\mid\plusV{\mt{cons}} \Rightarrow \mathit{ys} \shortleftarrow (}
                            \plusA{\mathit{ys}}{\mt{cons}}}})\,}}{ & \text{\itshape\% write tag $\mt{cons}$ to ys} \\[-1ex]\hphantom{\quad\mid\plusV{\mt{cons}} \Rightarrow {}}
                      \dnA{\mathit{ysp}}{\mathit{ys}}}))} & \text{\itshape\% write pointer to ys at destination ysp}
    \end{array}
  \end{lgathered}
\end{equation*}
After reading the pointer $\mathit{xsp}$ to access $\mathit{xs}{:}\boollist$, the SNAX version of $\mathit{map}$ generally follows the pattern of the SAX $\mathit{map}$, with a few essential deviations.
First, here there is only a single point at which memory is allocated: the cut indicated by the $\mathit{ys} \shortleftarrow ({\dotsm});$ syntax.
Second, the projections such as $\mathit{xs} \cdot \ptag{\mathtt{cons}} \cdot \pi_1$ are used to refer to memory cells within the allocated block as laid out by the type $\boollist$.
Third, because the projections are locally calculable, they can be elided from some parts of the syntax.
Last, because the type now uses $\mathit{ysp} : \dn \boollist$, a final $\dnA{\mathit{ysp}}{\mathit{ys}}$ is needed.

\subsection{Details of the SNAX type theory}

Types and signatures are exactly as they were in SAX, so we do not repeat the details here.

As compared to SAX, SNAX processes have one additional form: $\snip{a}{P}{Q}$ for concurrent composition of processes $P$ and $Q$ that does not allocate memory.
Storables $S$ and co-storables $T$ have slightly different forms than in SAX, owing to SNAX's elision of eligible addresses from process syntax.
\begin{alignat*}{2}
  \text{\emph{Processes}} &\quad&
    P , Q &\Coloneqq \cut{x}{P}{Q} \mid \snip{a}{P}{Q} \mid \id{a}{b} \mid \writeV{a}{S} \mid \callK{a}{T}
  \\
  \text{\emph{Storables}} &&
    S &\Coloneqq \tensorV \mid \oneV \mid \plusV{k} \mid \dnV{a}
            \mid \impK{x,z}{P}
  \\
  \text{\emph{Co-storables}} &&
    T &\Coloneqq \tensorK{P} \mid \oneK{P} \mid \plusK[\ell \in L]{{\ell} => P_\ell} \mid \dnK{x}{P}
            \mid \impV{a_1,a_2}
\end{alignat*}

Once again, we use an operational semantics based on multiset rewriting with semantic objects of the forms $\thread{a}{P}$, $\cell{a}{}$, and $\cell{a}{S}$ that represent running processes, empty cells, and filled cells, respectively.

Configurations $\cnf$ and configuration contexts $\cctx$ are the same as in SAX; once again, configuration contexts are not subject to contraction and their addresses are presumed to be distinct, an invariant that will be preserved by the configuration typing rules.
Also, we will continue to write $a$, $b$, and $c$ in configurations and configuration contexts, but these runtime addresses may not contain static variables $x$.

The configuration typing rules are quite similar to those of SAX, but include one twist.
Unlike SNAX process contexts $\ctx$, configuration contexts $\cctx$ do not track eligibility.
To mediate the two in rules $\jrule{THREAD}$ and $\jrule{CELL}$, we therefore define a judgment $\cctx \vDash_c \ctx$ that holds exactly when three conditions are met:
\begin{enumerate*}[label=\emph{(\roman*)}]
\item $a{:}A \in \ctx$ only if $a{:}A \in \cctx$;
\item $\eli{a{:}A} \in \ctx$ only if $a{:}A \in \cctx$ and $a \extends c$; and 
\item $a{:}A \in \cctx$ and $a \extends c$ only if either $\eli{a{:}A} \in \ctx$ or $a \extends b$ for some $\eli{b{:}B} \in \ctx$.
\end{enumerate*}
The process typing premises in rules $\jrule{THREAD}$ and $\jrule{CELL}$ then further guarantee that the choice of eligible antecedents is consistent with the process. 
\begin{mathpar}
  \infer[\jrule{EMP}]{\cctx \vDash \cnfe* :: \cctx}{}
  \and
  \infer[\jrule{JOIN}]{\cctx \vDash \cnf_1 \cc \cnf_2 :: \cctx''}{
    \cctx \vDash \cnf_1 :: \cctx' & 
    \cctx' \vDash \cnf_2 :: \cctx''}
  \\
  \infer[\jrule{THREAD}]{\cctx \vDash \thread{a}{P} \cc \cell{a}{} :: (\cctx , a{:}A)}{
    (a \notin \dom{\cctx}) &
    \cctx \vDash_a \ctx &
    \ctx \vdash P :: (a : A)}
  \and
  \infer[\jrule{CELL}]{\cctx \vDash \cell{a}{S} :: (\cctx , a{:}A)}{
    (a \notin \dom{\cctx}) &
    \cctx \vDash_a \ctx &
    \ctx \vdash \writeV{a}{S} :: (a : A)}
\end{mathpar}
The rewriting rules for futures, writes, reads, and calls are essentially the same as in SAX, but the definition of $\pass{S}{T}$ changes slightly.
Because some addresses are locally calculable and elided from the syntax, substitution is no longer needed in some cases; however, substitution is still needed for the cases for pointers and functions.
\begin{equation*}
  \begin{lgathered}[t]
    \begin{aligned}[t]
      \MoveEqLeft[.5]
      \thread{c}{(\cut{x}{P}{Q})}\qquad\qquad\qquad\qquad\,\text{($\alpha$ fresh)} \\
        &\stepsto \thread[\alpha/x]{\alpha}{P} \cc \cell{\alpha}{} \cc \thread[\alpha/x]{c}{Q}
    \end{aligned}
    \\[0.5ex]
    \thread{a}{\writeV{a}{S}} \cc \cell{a}{}
      \stepsto \cell{a}{S}
    \\[0.5ex]
    \thread{c}{\readV{a}{T}} \cc \cell{a}{S}
      \stepsto \thread{c}{\pass{S}{T}}
    \\[0.5ex]
    \begin{aligned}[t]
      \MoveEqLeft[.5]
      \thread{c}{\callfun{\mathit{p}}{c}{a_1 \dotsb\mkern1.5mu a_n}} \\
        &\stepsto \thread{c}{[c/z, a_1/x_1, \dotsc, a_n/x_n]P} \\
        &\mathrel{\text{(where $\fun{\mathit{p}}{z}{x_1 \dotsb\mkern1.5mu x_n} = P$)}} {}
    \end{aligned}
  \end{lgathered}
  \qquad
  \begin{lgathered}[t]
    \text{where $\pass{S}{T}$ is given by} \\[0.5ex]\quad
    \begin{aligned}[t]
      \pass{\tensorV}{\tensorK{P}} &= P \\[0.5ex]
      \pass{\oneV}{\oneK{P}} &= P \\[0.5ex]
      \pass{\plusV{k}}{\plusK[\ell \in L]{{\ell} => P_\ell}} &= P_k \quad (k \in L) \\[0.5ex]
      \pass{\ptrV{a}}{\ptrK{x}{P}} &= [a/x]P \\[0.5ex]
      \pass{\impV{a_1,a_2}}{\impK{x,z}{P}} &= [a_1/x,a_2/z]P
    \end{aligned}
  \end{lgathered}
\end{equation*}

Snips are the essential difference between SAX and SNAX.
The rewriting rule for a snip is broadly similar to that for futures, with the key difference that an address $a$ is used instead of choosing a fresh $\alpha$.
\begin{equation*}
  \thread{c}{(\snip{a}{P}{Q})}
    \stepsto \thread{a}{P} \cc \cell{a}{} \cc \thread{c}{Q}
    \text{ if $\cseq{P} = \set{a}$}
\end{equation*}
Because the address $a$ written by $P$ is not made locally explicit in the snip construct $\snip{a}{P}{Q}$, the function $\cseq{P}$ (short for ``destination'') traverses the process $P$ to extract that address.
This function returns a set of addresses, but for a well-typed process $P$, the set will always be a singleton.
The definition of $\cseq{P}$ can be found in \cref{app:cseq}.

Copying data must be handled differently in SNAX than in SAX.
In SAX, we could simply copy a storable from one address to another; the implicit sharing would take care of a type's subformulas.
In SNAX, all sharing is made explicit through the $\dn$ shifts, and those may or may not appear in a given type's subformulas.
However, at types $\dn A$ and $A_1 \imp A_2$, simply copying the storable suffices.
\begin{equation*}
  \thread{a}{\id{a}{b}} \cc \cell{b}{S}
    \stepsto \cell{a}{S}
\end{equation*}
Before a process may be executed, we require that $\mathsf{copy}$s at other types are expanded, using reads and writes, so that only $\mathsf{copy}$s at types $\dn A$ and $A_1 \imp A_2$ remain; this is reminiscent of $\eta$-expansion.
The expansion of $\mathsf{copy}$s is shown in \cref{app:cseq}.

\subsection{Type safety for SNAX}\label{sec:SNAX:safety}

SNAX satisfies type safety, in the form of type preservation and progress results.
Preservation is a bit subtle, but ultimately not difficult, to prove; it relies on various lemmas surrounding eligibility, as well as the definition of $\cctx \vDash_c \ctx$.
\begin{theorem}[Preservation]
  If $\cctx_0 \vDash \cnf :: \cctx$ and $\cnf \stepsto \cnf'$, then $\cctx_0 \vDash \cnf' :: \cctx'$ for some $\cctx' \supseteq \cctx$.
\end{theorem}
\begin{proof}
  By induction on  the given derivation, using a few lemmas about eligibility; see \cref{app:metatheorems}.
\end{proof}

\begin{theorem}[Progress]
  If\/ $\vDash \cnf :: \cctx$, then either $\cnf$ is final or $\cnf \stepsto \cnf'$ for some $\cnf'$.
\end{theorem}
\begin{proof}
  By right-to-left induction on the structure of the given derivation; see \cref{app:metatheorems}.
\end{proof}

\section{Related work}\label{sec:related}

Besides the aforementioned work on the semi-axiomatic sequent calculus and SAX~\cite{DeYoung20fscd}, another item of related work is Smullyan's classical sequent calculus in which cuts must be analytic and all other inference rules are replaced by axioms~\cite{Smullyan68jsl}.
Because all cuts are analytic, there is no direct cut elimination procedure and, consequently, the calculus does not seem to lend itself to computational interpretation.

From a computational standpoint, most closely related is perhaps the work on data layout using
\emph{ordered types}~\cite{Petersen03popl}.  Ordered types were suitable to capture the original
allocation and layout of data, but not the whole state of memory during computation
since ordered logic has only a single ordered
context thus cannot directly model many blocks of memory connected by pointers.
The current design
overcomes both of these limitations with a very different approach: our logic (and therefore the
type theory) is not substructural at all.

Another point of comparison is Typed Assembly Language
(TAL)~\cite{Morrisett99toplas}.  We view TAL as a low-level type system that can reflect high level abstractions,
but it does not seem to correspond to any particular proof system for intuitionistic logic.
Furthermore, while TAL by necessity works with concrete data layouts, the compilation from the
$\lambda$-calculus to TAL chooses a particular one among them rather than providing a choice to the
programmer.  Another point of difference is that in SNAX, functions receive \emph{destinations}
(that is, memory locations) for their results, while in TAL they receive \emph{continuations} to be
called with the result.  TAL also resolves some issues that we leave to future work.  Among them
are parametric polymorphism and representation of closures.

\section{Conclusion}

We have shown how elevating notions of eligibility and snips that arise in the semi-axiomatic sequent calculus's cut elimination proof from refinement properties to first-class logical concepts yields a Curry--Howard explanation of (abstract) data layout in futures-based shared memory concurrency.
Moreover, we have proved type preservation and progress for the resulting SNAX type theory.

In future work, we plan to extend SNAX to support parametric polymorphism, as well as adjoint layers for integrating a treatment of linear data with SNAX's existing treatment of persistent, write-once data.
In designing both extensions, we will be able to lean on SNAX's strong logical foundations.
Studying code optimization in the SNAX setting is another avenue for future work that we are pursuing.

\bibliographystyle{../../entics}
\bibliography{mfps22}

\appendix
\section{Auxiliary definitions for SNAX operational semantics}\label{app:cseq}

The definition of $\cseq{P}$, which is used in the operational semantics of snips, is as follows.
\begin{align*}
  \cseq{\cut{x}{P}{Q}} &= \cseq{Q} \\
  \cseq{\snip{a}{P}{Q}} &= \cseq{Q} \\
  \cseq{\id{a}{b}} &= \set{a} \\
  \cseq{\writeV{a}{S}} &= \set{a} \\
  \cseq{\tensorL{a}{P}} &= \cseq{P} \\
  \cseq{\oneL{a}{P}} &= \cseq{P} \\
  \cseq{\plusL[\ell \in L]{a}{{\ell} => P_\ell}} &= {\textstyle \bigcup_{\ell \in L} \cseq{P_\ell}} \\
  \cseq{\dnL{a}{x}{P}} &= \cseq{P} \\
  \cseq{\impA{a}{a_1,a_2}} &= \set{a_2}
\end{align*}
Expansion of $\mathsf{copy}$s down to types $\dn A$ and $A_1 \imp A_2$ is accomplished by the following function.
\begin{align*}
  \eta(\id{a}{b} : A_1 \tensor A_2)
    &= \tensorL{b}{\snip{a \pi_1}{\eta(\id{(a \pi_1)}{(b \pi_1)} : A_1)}{\snip{a \pi_2}{\eta(\id{(a \pi_2)}{(b \pi_2)} : A_2)}{\tensorA{a}}}}
  \\
  \eta(\id{a}{b} : \one)
    &= \oneL{b}{\oneA{a}}
  \\
  \eta(\id{a}{b} : \plus*[\ell \in L]{\ell: A_\ell})
    &= \plusL[\ell \in L]{b}{{\ell} => \snip{a \cdot \ptagg{\ell}}{\eta(\id{(a{\cdot}\ptagg{\ell})}{(b{\cdot}\ptagg{\ell})} : A_\ell)}{\plusA{a}{\ell}}}
  \\
  \eta(\id{a}{b} : \dn A)
    &= \id{a}{b}
  \\
  \eta(\id{a}{b} : A \imp B)
    &= \id{a}{b}
\end{align*}

\section{SNAX Metatheorems}\label{app:metatheorems}

\begin{lemma}\label{lem:suffix}
  If $a \extends b$ and $a \extends c$, then either $b \wextends c$ or $c \wextends b$.
\end{lemma}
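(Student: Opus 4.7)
The plan is to prove Lemma~\ref{lem:suffix} by reducing it to the standard fact that two prefixes of a common sequence are linearly ordered by the prefix relation. To set this up, I would first establish a normal form for addresses: every address $a$ can be uniquely written as a head variable (either $x$ or $\alpha$) followed by a finite sequence of \emph{atomic} projections, where by atomic I mean $\pi_1$, $\pi_2$, or $\ptagg{\ell}$. This is proved by induction on the structure of $a$: the two variable base cases give the empty sequence; in the case $a = a' \cdot p$, I would inductively flatten $a'$ into its head plus sequence $(q_1, \ldots, q_m)$, and separately flatten the projection $p$ into atomic components $(r_1, \ldots, r_k)$ by induction on $p$ (using the composition case $p_1 \cdot p_2$ to concatenate). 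The sequence for $a$ is then $(q_1, \ldots, q_m, r_1, \ldots, r_k)$.

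Next, I would observe that under this normal form, the extension relations have a clean description: $b \wextends c$ holds exactly when $b$ and $c$ share the same head and the atomic sequence of $c$ is a prefix of that of $b$; strict extension $b \extends c$ corresponds to strict prefix. Both directions of this equivalence follow directly from the definitions of $\extends$ and $\wextends$ together with the normalization step above.

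With this machinery in place, the lemma is nearly immediate. Given $a \extends b$ and $a \extends c$, both $b$ and $c$ must share a head with $a$ (hence with each other), and their atomic sequences are each prefixes of $a$'s atomic sequence. Since any two prefixes of a common finite sequence are comparable under the prefix order, either $b$'s sequence is a prefix of $c$'s or $c$'s is a prefix of $b$'s, which translates back to $c \wextends b$ or $b \wextends c$, respectively.

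I expect the main obstacle to be the bookkeeping around normal forms, since the grammar allows syntactically distinct but semantically equal compositions like $(a \cdot p_1) \cdot p_2$ versus $a \cdot (p_1 \cdot p_2)$; the argument implicitly treats $\cdot$ as associative on projections, and one must either work modulo this associativity or state the normal-form lemma and prefix correspondence carefully enough to make this equivalence explicit. As a fallback, a direct structural induction on the witnessing projection $p$ with $a = b \cdot p$ (case-splitting on atomic vs.\ composite $p$, and within the composite case reassociating to apply the induction hypothesis to $c$) sidesteps the normal-form detour at the cost of more case analysis.
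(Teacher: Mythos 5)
Your proof is correct, but it takes a genuinely different route from the paper's. The paper proves the lemma directly: writing $a = b \cdot p_1 = c \cdot p_2$, it performs a simultaneous structural induction on the two witnessing projections $p_1$ and $p_2$, case-splitting on whether each is atomic or a composition, and concludes $b \wextends c$ or $c \wextends b$ from the induction hypothesis in the composite cases. You instead factor the argument through a normal-form lemma -- every address is a head variable followed by a flat sequence of atomic projections, with $\extends$ and $\wextends$ characterized as strict and non-strict prefix of that sequence -- and then invoke the standard fact that two prefixes of a common finite sequence are comparable. Your route costs an extra lemma but buys conceptual clarity and, notably, forces the associativity issue into the open: both proofs implicitly identify $(a \cdot p_1) \cdot p_2$ with $a \cdot (p_1 \cdot p_2)$ (the paper needs this silently when reassociating in the composite cases of its induction, and again when it appeals to transitivity of $\extends$ in Lemma~\ref{lem:extends}), and your normal-form formulation makes that convention explicit rather than leaving it buried in the case analysis. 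Your fallback suggestion -- a direct induction on the witnessing projection -- is essentially the paper's proof, modulo doing the induction on one projection rather than both simultaneously.
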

\begin{proof}
  By proving by simultaneous induction on $p_1$ and $p_2$ that $b \cdot p_1 = c \cdot p_2$ implies $b \wextends c$ or $c \wextends b$.
\end{proof}

\begin{lemma}\label{lem:extends}
  If $\ctx , \eli{a{:}A} \vdash P :: (c : C)$, then $a \extends c$.
\end{lemma}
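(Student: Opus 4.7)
I plan to proceed by structural induction on the derivation of $\ctx , \eli{a{:}A} \vdash P :: (c : C)$, analyzing the last rule applied, with the intended reading that every eligible antecedent in every judgment appearing in the derivation strictly extends that judgment's destination.

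For the axioms that introduce eligible antecedents---namely $\arule{\tensor}$, with eligibles $a\pi_1, a\pi_2$ and destination $a$, and $\arule{\plus}_k$, with eligible $a{\cdot}\ptagg{k}$ and destination $a$---the extension $a \cdot p \extends a$ holds immediately by definition of $\cdot$. The other axioms ($\arule{\one}$, $\arule{\dn}$, $\arule{\imp}$, and $\jrule{ID}$) introduce no eligible antecedent, so the claim is vacuous. The structural rules $\jrule{W}$, $\jrule{C}$, and $\jrule{C}_{\jrule{E}}$ preserve eligibility between premise and conclusion: weakening is forbidden on eligible antecedents, and $\jrule{C}_{\jrule{E}}$'s premise still contains an $\eli{b{:}B}$ matching the conclusion's. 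In each case, the IH applied to the premise yields the result.

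The two cut-like rules require more care. For $\jrule{CUT}$ with premises $\ctx_1 \vdash P :: (x : A')$ and $\ctx_2 , x{:}A' \vdash Q :: (c : C)$ where $x$ is fresh for the conclusion, the eligible $\eli{a{:}A}$ cannot lie in $\ctx_1$: IH on the first premise would force $a \extends x$, making the fresh $x$ the head variable of an address already appearing in $\ctx_1$, a contradiction. Hence the eligible lies in $\ctx_2$, and IH on the second premise gives $a \extends c$. For $\p{\jrule{SNIP}}$ with premises $\ctx_1 \vdash P :: (b : B)$ and $\ctx_2 , \eli{b{:}B} \vdash Q :: (c : C)$, an eligible lying in $\ctx_2$ (including the newly introduced $\eli{b{:}B}$) is handled by IH on the second premise; an eligible $\eli{a{:}A}$ lying in $\ctx_1$ satisfies $a \extends b$ by IH on the first premise and $b \extends c$ by IH on the second, and transitivity of $\extends$---which follows from concatenation of projection paths---gives $a \extends c$.

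The remaining left rules $\lrule{\tensor}$, $\lrule{\plus}$, and $\lrule{\dn}$ each consume an eligible antecedent at their principal position. For any \emph{other} eligible in $\ctx$ that persists into the premise, IH directly applies. The main obstacle I anticipate is the consumption case itself: the premise replaces, for instance, $\eli{a{:}A_1 \tensor A_2}$ with ordinary $a\pi_1{:}A_1, a\pi_2{:}A_2$, so IH on the premise no longer sees the eligible of interest. Handling this cleanly means tracking how that eligible was originally introduced---by an axiom whose destination equals its own head, or by a $\p{\jrule{SNIP}}$ whose second premise is already governed (by IH) by the desired extension relation---and threading this information through the inductive argument, very likely by strengthening the inductive invariant so that derivations uniformly carry the destination-extension property on all eligible antecedents.
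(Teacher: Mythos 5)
Your handling of the two cases the paper actually writes out---$\p{\jrule{SNIP}}$ and $\jrule{CUT}$---is exactly the paper's argument: for a snip with the tracked eligible in the first premise, the inductive hypothesis gives $a \extends b$, the inductive hypothesis applied to the freshly introduced $\eli{b{:}B}$ in the second premise gives $b \extends c$, and transitivity of $\extends$ (concatenation of projections) closes the case; for a cut, an eligible antecedent in the first premise would force $a \extends x$ against the freshness of $x$, so it must lie in the second premise. Your base cases for $\arule{\tensor}$ and $\arule{\plus}_k$ and your treatment of weakening and the two contraction rules are likewise what the paper intends. The one rule you do not mention is $\rrule{\imp}$, which replaces the destination by a fresh $z$ and is dispatched exactly like $\jrule{CUT}$ (the paper notes this parenthetically): a context containing an eligible antecedent simply cannot be the context of an $\rrule{\imp}$.

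The substantive point is your final paragraph, which you leave as a plan rather than a proof. You are right that a left rule whose \emph{principal} antecedent is the tracked $\eli{a{:}A}$ cannot be discharged by a direct appeal to the inductive hypothesis: the premises of $\lrule{\tensor}$, $\lrule{\plus}$, and $\lrule{\dn}$ replace that antecedent by ordinary addresses (or a fresh variable), so the lemma's hypothesis no longer holds of any premise, and nothing in the premise by itself constrains $a$ relative to $c$. The paper's proof is silent here---it declares $\p{\jrule{SNIP}}$ and $\jrule{CUT}$ to be ``the two interesting cases''---so it is implicitly read as saying that in every remaining rule the eligible antecedent of the conclusion survives verbatim into a premise with the same destination; that is, the $\deli{\cdot}$-marked principal formula of a left rule is not taken to be an instance of the lemma's $\eli{a{:}A}$. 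Under that reading your anticipated obstacle does not arise and the left rules are as routine as the structural rules. If the principal formula of a left rule could genuinely be the eligible antecedent in question, the statement as given would need a side condition or the strengthened invariant you gesture at, since the premise of, say, $\lrule{\tensor}$ could immediately copy $a\pi_1$ to a destination unrelated to $a$. So your proof coincides with the paper's wherever the paper gives detail, and the single case you leave unresolved is precisely the case the paper does not discuss; to complete the argument you should show that this case does not occur (which is how the paper's claim of only two interesting cases must be understood), rather than deferring to an unspecified strengthening of the induction.
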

\begin{proof}
  By induction on the structure of the given derivation.
  The two interesting cases are as follows.

  \paragraph*{Case:}
  \begin{equation*}
    \infer[\p{\jrule{SNIP}}]{\ctx_1 , \ctx_2 , \eli{a{:}A} \vdash \snip{b}{P}{Q} :: (c : C)}{
      \ctx_1 , \eli{a{:}A} \vdash P :: (b : B) &
      \ctx_2 , \eli{b{:}B} \vdash Q :: (c : C)}
  \end{equation*}
  Appealing to the inductive hypothesis on the first premise, we know that $a \extends b$.
  Similarly, appealing to the inductive hypothesis on the second premise, we know that $b \extends c$.
  So $a \extends c$ follows from transitivity.

  \paragraph*{Case:}
  \begin{equation*}
    \infer[\jrule{CUT}]{\ctx_1 , \ctx_2 , \eli{a{:}A} \vdash \cut{x}{P}{Q} :: (c : C)}{
      \ctx_1 , \eli{a{:}A} \vdash P :: (x : B) &
      \ctx_2 , x{:}B \vdash Q :: (c : C) &
      \text{($x$ fresh)}}
  \end{equation*}
  By the inductive hypothesis on the first premise, we know that $a \extends x$.
  However, then having $\eli{a{:}A}$ in the rule's conclusion contradicts the freshness of $x$.
  (The case for the $\rrule{\imp}$ rule is similar.)
\end{proof}

\begin{lemma}\label{lem:twoeli}
  If $\ctx , \eli{a{:}A} , \eli{b{:}B} \vdash P :: (c : C)$, then $a \nwextends b$ and $b \nwextends a$.
\end{lemma}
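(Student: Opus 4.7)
The plan is to proceed by induction on the derivation of $\ctx, \eli{a{:}A}, \eli{b{:}B} \vdash P :: (c : C)$. Most rule cases dispose of themselves immediately. The rules $\jrule{ID}$, $\arule{\one}$, $\arule{\plus}_k$, $\arule{\dn}$, $\arule{\imp}$, and $\jrule{CALL}$ cannot admit two eligible antecedents in the conclusion, so those cases are vacuous. The rules $\jrule{W}$, $\jrule{C}$, $\jrule{C}_{\jrule{E}}$, and the left rules $\lrule{\tensor}$, $\lrule{\one}$, $\lrule{\plus}$, $\lrule{\dn}$ all preserve both eligibles into a single premise, so the inductive hypothesis closes them. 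Finally, $\rrule{\imp}$ is impossible by the same freshness-of-$z$ argument used in Lemma~\ref{lem:extends}: both $a$ and $b$ would otherwise have to extend the fresh destination $z$.

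The remaining work lies in three cases: $\arule{\tensor}$, $\jrule{CUT}$, and $\p{\jrule{SNIP}}$. In $\arule{\tensor}$ the conclusion context is exactly $\eli{a\pi_1{:}A_1}, \eli{a\pi_2{:}A_2}$, so $\set{\eli{a{:}A}, \eli{b{:}B}}$ must coincide with $\set{\eli{a\pi_1{:}A_1}, \eli{a\pi_2{:}A_2}}$, and incomparability then follows directly from the abstract layout postulate that $a\pi_1 \cdot p_1 \neq a\pi_2 \cdot p_2$ for all projections $p_1, p_2$. In $\jrule{CUT}$, if both eligibles land in the same premise context then the inductive hypothesis applies; otherwise they split across the two premises, and Lemma~\ref{lem:extends} applied to the left premise forces whichever eligible is on the left to strictly extend the fresh cut variable $x$, contradicting $x$'s freshness. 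So the split configuration is impossible.

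The main obstacle is the $\p{\jrule{SNIP}}$ case with the eligibles split across the two premises. Writing the snip as $\snip{d}{P}{Q}$ with destination $d$, and assuming $\eli{a{:}A}$ lies in the left premise while $\eli{b{:}B}$ lies in the right, Lemma~\ref{lem:extends} applied to the left premise gives $a \extends d$, and the right premise satisfies the inductive hypothesis for the pair $\eli{d{:}D}, \eli{b{:}B}$, yielding $d \nwextends b$ and $b \nwextends d$. From these three facts I would derive incomparability of $a$ and $b$ by contradiction: assuming $a \wextends b$, the subcase $a = b$ turns $a \extends d$ into $b \extends d$; the subcase $a$ strictly extends $b$ places both $b$ and $d$ as prefixes of $a$, so Lemma~\ref{lem:suffix} forces $b \wextends d$ or $d \wextends b$; and the $b \wextends a$ direction is symmetric, chaining through $a \extends d$ to give $b \wextends d$. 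Every subcase contradicts the inductive hypothesis, so $a$ and $b$ must be incomparable. The remaining sub-case of $\p{\jrule{SNIP}}$, where both eligibles sit in the same premise, closes routinely by the inductive hypothesis.
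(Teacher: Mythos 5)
Your proof is correct and follows essentially the same route as the paper's: induction on the derivation, with the $\jrule{CUT}$ case dispatched by Lemma~\ref{lem:extends} against the freshness of $x$, and the split $\p{\jrule{SNIP}}$ case handled by combining $a \extends d$ from Lemma~\ref{lem:extends}, the inductive hypothesis on the right premise applied to the pair $\eli{d{:}D}, \eli{b{:}B}$, and Lemma~\ref{lem:suffix}. The only difference is that you spell out the routine cases and the $\arule{\tensor}$ base case (via the layout postulate), which the paper leaves implicit.
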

\begin{proof}
  By induction on the structure of the given derivation.
  The two interesting cases are as follows.
  \paragraph*{Case:}
  \begin{equation*}
    \infer[\p{\jrule{SNIP}}]{\ctx_1 , \ctx_2 , \eli{a{:}A} , \eli{b{:}B} \vdash \snip{c'}{P}{Q} :: (c : C)}{
      \ctx_1 , \eli{a{:}A} \vdash P :: (c' : C') &
      \ctx_2 , \eli{b{:}B} , \eli{c'{:}C'} \vdash Q :: (c : C)}
  \end{equation*}
  We must show that $a \neq b$ and $a \nextends b$ and $b \nextends a$.
  \begin{itemize}
  \item Suppose that $a = b$.
    We know from the first premise above and Lemma~\ref{lem:extends} that $a \extends c'$.
    So $b \extends c'$ as well.
    By the inductive hypothesis on the second premise above, $b \nwextends c'$, yielding a contradiction.
    Therefore $a \neq b$.
  \item Suppose that $a \extends b$.
    Once again, we know from the first premise above and Lemma~\ref{lem:extends} that $a \extends c'$.
    By Lemma~\ref{lem:suffix}, either $b \wextends c'$ or $c' \wextends b$.
    Appealing to the inductive hypothesis on the second premise above, $b \nwextends c'$ and $c' \nwextends b$.
    This is a contradiction, so $a \nextends b$.
  \item Suppose that $b \extends a$.
    Once again, we know from the first premise above and Lemma~\ref{lem:extends} that $a \extends c'$.
    So $b \extends c'$ follows by transitivity of $\extends$.
    Appealing to the inductive hypothesis on the second premise above, $b \nwextends c'$ and $c' \nwextends b$.
    This is a contradiction, so $b \nextends a$.
  \end{itemize}

  \paragraph*{Case:}
  \begin{equation*}
    \infer[\jrule{CUT}]{\ctx_1 , \ctx_2 , \eli{a{:}A} , \eli{b{:}B} \vdash \cut{x}{P}{Q} :: (c : C)}{
      \ctx_1 , \eli{a{:}A} \vdash P :: (x : C') &
      \ctx_2 , \eli{b{:}B} , x{:}C' \vdash Q :: (c : C) &
      \text{($x$ fresh)}}
  \end{equation*}
  By Lemma~\ref{lem:extends} on the first premise, we know that $a \extends x$.
  However, then having $\eli{a{:}A}$ in the rule's conclusion contradicts the freshness of $x$.
  (The case for the $\rrule{\imp}$ rule is similar.)
\end{proof}

\begin{lemma}\label{lem:modelsweaken}
  If $\cctx \vDash_c \ctx , a{:}A$, then $\cctx \vDash_c \ctx$.
\end{lemma}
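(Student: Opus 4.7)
The plan is to unfold the three-clause definition of $\cctx \vDash_c \ctx$ and verify each clause for $\ctx$ given that it holds for $\ctx, a{:}A$. The critical observation is that $a{:}A$ is an \emph{ordinary} (non-eligible) antecedent, so dropping it cannot eliminate any eligible antecedent that other clauses might rely on as a witness.

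First I would check clauses (i) and (ii). Both clauses are universally quantified statements of the form ``every antecedent (resp.\ every eligible antecedent) in the process context satisfies some property with respect to $\cctx$''. Since every antecedent of $\ctx$ is also an antecedent of $\ctx, a{:}A$, and similarly for eligible antecedents, clauses (i) and (ii) for $\ctx$ follow immediately from clauses (i) and (ii) for $\ctx, a{:}A$.

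The only slightly substantive step is clause (iii): I must show that whenever $b{:}B \in \cctx$ and $b \extends c$, either $\eli{b{:}B} \in \ctx$ or $b \extends b'$ for some $\eli{b'{:}B'} \in \ctx$. By clause (iii) applied to $\ctx, a{:}A$, there is a witness in $\ctx, a{:}A$: it is either $\eli{b{:}B}$ itself or some $\eli{b'{:}B'}$. In either case, the witness is an eligible antecedent, so it cannot be the (ordinary) antecedent $a{:}A$; hence the witness is already in $\ctx$, establishing clause (iii) for $\ctx$.

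There is no real obstacle here: the lemma is essentially a definitional observation that weakening in the unrestricted zone preserves the $\vDash_c$ relation. It would be worth emphasizing in the write-up that the analogous statement for removing an \emph{eligible} antecedent does \emph{not} hold, because clause (iii) could then lose its witness and fail.
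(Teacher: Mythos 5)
Your proposal is correct and matches the paper's own proof essentially verbatim: both unfold the three clauses of $\vDash_c$, note that clauses (i) and (ii) are inherited trivially, and observe for clause (iii) that the witness must be eligible and hence cannot be the ordinary antecedent $a{:}A$. Your closing remark that the lemma would fail for removing an \emph{eligible} antecedent is a useful observation not made explicit in the paper.
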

\begin{proof}
  Assume $\cctx \vDash_c \ctx , a{:}A$.
  To establish $\cctx \vDash_c \ctx$, there are three parts.
  \begin{itemize}
  \item Assume $b{:}B \in \ctx$.
    Then $b{:}B \in \ctx , a{:}A$ as well.
    It follows from $\cctx \vDash_c \ctx , a{:}A$ that $b{:}B \in \cctx$.
  \item Assume that $\eli{b{:}B} \in \ctx$.
    Then $\eli{b{:}B} \in \ctx , a{:}A$ as well.
    It follows from $\cctx \vDash_c \ctx , a{:}A$ that $b{:}B \in \cctx$ and $b \extends c$.
  \item Assume that $b{:}B \in \cctx$ and $b \extends c$.
    It follows from $\cctx \vDash_c \ctx , a{:}A$ that either $\eli{b{:}B} \in \ctx , a{:}A$ or $b \extends b'$ for some $\eli{b'{:}B'} \in \ctx , a{:}A$.
    Because $a{:}A$ is ordinary, either $\eli{b{:}B} \in \ctx$ or $b \extends b'$ for some $\eli{b'{:}B'} \in \ctx$.
  \end{itemize}
\end{proof}

\begin{theorem}[Preservation]
  If $\cctx_0 \vDash \cnf :: \cctx$ and $\cnf \stepsto \cnf'$, then $\cctx_0 \vDash \cnf' :: \cctx'$ for some $\cctx' \supseteq \cctx$.
\end{theorem}
\begin{proof}
  By induction on the structure of the given derivation, appealing to the preceding lemmas about eligibility.
  The most interesting case is as follows.
  \paragraph*{Case:}
  \begin{gather*}
    \infer[\jrule{THREAD}]{\cctx \vDash \thread{c}{\snip{a}{P}{Q}} \cc \cell{c}{} :: (\cctx , c{:}C)}{
      (c \notin \dom{\cctx}) &
      \cctx \vDash_c \ctx_1 , \ctx_2 &
      \infer[\p{\jrule{SNIP}}]{\ctx_1 , \ctx_2 \vdash \snip{a}{P}{Q} :: (c : C)}{
        \ctx_1 \vdash P :: (a : A) &
        \ctx_2 , \eli{a{:}A} \vdash Q :: (c : C)}}
    \\\stepsto\\
    \infer[\jrule{JOIN}]{\cctx \vDash \thread{a}{P} \cc \cell{a}{} \cc \thread{c}{Q} \cc \cell{c}{} :: (\cctx , a{:}A , c{:}C)}{
    \infer{\cctx \vDash \thread{a}{P} \cc \cell{a}{} :: (\cctx , a{:}A)}{
      (a \notin \dom{\cctx}) &
      \cctx \vDash_a \ctx_1 &
      \ctx_1 \vdash P :: (a : A)} &
    \infer{\cctx , a{:}A \vDash \thread{c}{Q} \cc \cell{c}{} :: (\cctx , a{:}A , c{:}C)}{
      (c \notin \dom{(\cctx , a{:}A)}) &
      \cctx , a{:}A \vDash_c \ctx_2 , \eli{a{:}A} &
      \ctx_2 , \eli{a{:}A} \vdash Q :: (c : C)}}
  \end{gather*}
  First, we must show that $a \notin \dom{\cctx}$.
    \begin{itemize}
    \item Suppose that $a \in \dom{\cctx}$.
      From the snip's second premise, we know that $a \extends c$ (Lemma~\ref{lem:extends}).
      Because $\cctx \vDash_c \ctx_1 , \ctx_2$, either: $\eli{a{:}A} \in \ctx_1$; $\eli{a{:}A} \in \ctx_2$; or $a \extends b$ for some $\eli{b{:}B} \in \ctx_1 , \ctx_2$.
      \begin{itemize}
      \item If $\eli{a{:}A} \in \ctx_1$, then Lemma~\ref{lem:extends} on the first premise yields $a \extends a$, which is impossible.
      \item If $\eli{a{:}A} \in \ctx_2$, then Lemma~\ref{lem:twoeli} on the second premise yields $a \nwextends a$, which is impossible.
      \item Otherwise, $a \extends b$ for some $\eli{b{:}B} \in \ctx_1 , \ctx_2$.
        If $\eli{b{:}B} \in \ctx_1$, then Lemma~\ref{lem:extends} yields $b \extends a$, which contradicts $a \extends b$.
        If $\eli{b{:}B} \in \ctx_2$, then Lemma~\ref{lem:twoeli} yields $a \nwextends b$, which contradicts $a \extends b$.
      \end{itemize}
    \end{itemize}
  Second, we must show that $\cctx \vDash_a \ctx_1$.
    \begin{itemize}
    \item Assume that $b{:}B \in \ctx_1$.
      From $\cctx \vDash_c \ctx_1 , \ctx_2$, we therefore know that $b{:}B \in \cctx$, as required.

    \item Assume that $\eli{b{:}B} \in \ctx_1$.
      From $\cctx \vDash_c \ctx_1 , \ctx_2$, we therefore know that $b{:}B \in \cctx$ (and $b \extends c$).
      Because $\eli{b{:}B} \in \ctx_1$, Lemma~\ref{lem:extends} on the first premise yields $b \extends a$, as required.
    \item Assume that $b{:}B \in \cctx$ and $b \extends a$.
      From $\cctx \vDash_c \ctx_1 , \ctx_2$, we therefore know that either $\eli{b{:}B} \in \ctx_1 , \ctx_2$ or $b \extends b'$ for some $\eli{b'{:}B'} \in \ctx_1 , \ctx_2$.
      \begin{itemize}
      \item Suppose that $\eli{b{:}B} \in \ctx_2$.
        By Lemma~\ref{lem:twoeli} on the second premise, $b \nwextends a$, which contradicts $b \extends a$.
      \item Suppose that $b \extends b'$ for some $\eli{b'{:}B'} \in \ctx_2$.
        Because both $b \extends a$ and $b \extends b'$, Lemma~\ref{lem:suffix} yields either $a \wextends b'$ or $b' \wextends a$.
        However, by Lemma~\ref{lem:twoeli} and the second premise, neither of these can be true.
      \end{itemize}
      The only remaining possibility is that either $\eli{b{:}B} \in \ctx_1$ or $b \extends b'$ for some $\eli{b'{:}B'} \in \ctx_1$, as required.
    \end{itemize}
  Third, we must show that $c \notin \dom{(\cctx , a{:}A)}$.
  \begin{itemize}
  \item We are given that $c \notin \dom{\cctx}$.
    From Lemma~\ref{lem:extends} and the snip's second premise, we know that $a \extends c$.
    This also implies that $c \neq a$, so we may indeed conclude that $c \notin \dom{(\cctx , a{:}A)}$.
  \end{itemize}
  Fourth, we must show that $\cctx , a{:}A \vDash_c \ctx_2 , \eli{a{:}A}$.
  \begin{itemize}
  \item Assume that $b{:}B \in \ctx_2 , \eli{a{:}A}$.
    More precisely, $b{:}B \in \ctx_2$.
    Because $\cctx \vDash_c \ctx_1 , \ctx_2$, it follows that $b{:}B \in \cctx$.
  \item Assume that $\eli{b{:}B} \in \ctx_2 , \eli{a{:}A}$.
    \begin{itemize}
    \item If $\eli{b{:}B} \in \ctx_2$, then it follows from $\cctx \vDash_c \ctx_1 , \ctx_2$, it follows that $b{:}B \in \cctx$ and $b \extends c$, as required.
    \item Otherwise, $b = a$ and $B = A$.
      Then $b{:}B \in \cctx , a{:}A$.
      Also, by Lemma~\ref{lem:extends} on the first premise, $a \extends c$.
      So $b \extends c$, as required.
    \end{itemize}
    \item Assume that $b{:}B \in \cctx , a{:}A$ and $b \extends c$.
      If $b = a$ and $B = A$, then $\eli{b{:}B} \in \ctx_2 , \eli{a{:}A}$.
      Otherwise, $b{:}B \in \cctx$.
      From $\cctx \vDash_c \ctx_1 , \ctx_2$, we therefore know that either $\eli{b{:}B} \in \ctx_1 , \ctx_2$ or $b \extends b'$ for some $\eli{b'{:}B'} \in \ctx_1 , \ctx_2$.
      \begin{itemize}
      \item Suppose that $\eli{b{:}B} \in \ctx_1$.
        By Lemma~\ref{lem:extends} on the first premise, $b \extends a$.
        And $\eli{a{:}A} \in \ctx_2 , \eli{a{:}A}$, as required.
      \item Suppose that $b \extends b'$ for some $\eli{b'{:}B'} \in \ctx_1$.
        By Lemma~\ref{lem:extends} and the first premise, $b' \extends a$.
        By transitivity, $b \extends a$.
        And $\eli{a{:}A} \in \ctx_2 , \eli{a{:}A}$, as required.
      \end{itemize}
      Therefore, in all cases, either $\eli{b{:}B} \in \ctx_2 , \eli{a{:}A}$ or $b \extends b'$ for some $\eli{b'{:}B'} \in \ctx_2 , \eli{a{:}A}$, as required.
    \end{itemize}

  \paragraph*{Case:}
  \begin{gather*}
    \infer[\jrule{THREAD}]{\cctx_0 \vDash \thread{c}{(\cut{x}{P}{Q})} \cc \cell{c}{} :: (\cctx_0 , c{:}C)}{
      (c \notin \dom{\cctx_0}) &
      \cctx_0 \vDash_c \ctx_1 , \ctx_2 &
      \infer[\jrule{CUT}]{\ctx_1 , \ctx_2 \vdash \cut{x}{P}{Q} :: (c : C)}{
        \ctx_1 \vdash P :: (x : A) &
        \ctx_2 , x{:}A \vdash Q :: (c : C) &
        \text{($x$ fresh)}}}
    \\\stepsto\\
    \infer{\cctx_0 \vDash \thread[\alpha/x]{\alpha}{P} \cc \cell{\alpha}{} \cc \thread[\alpha/x]{c}{Q} \cc \cell{c}{} :: (\cctx_0 , \alpha{:}A , c{:}C)}{
      \deduce{\cctx_0 \vDash \thread[\alpha/x]{\alpha}{P} \cc \cell{\alpha}{} :: (\cctx_0 , \alpha{:}A)}{\mathcal{D}} &
      \deduce{\cctx_0 , \alpha{:}A \vDash \thread[\alpha/x]{c}{Q} \cc \cell{c}{} :: (\cctx_0 , \alpha{:}A , c{:}C)}{\mathcal{E}}}
  \end{gather*}
  where
  \begin{gather*}
    \mathcal{D}
    =
    \infer{\cctx_0 \vDash \thread[\alpha/x]{\alpha}{P} \cc \cell{\alpha}{} :: (\cctx_0 , \alpha{:}A)}{
      (\alpha \notin \dom{\cctx_0}) &
      \cctx_0 \vDash_\alpha \ctx_1 &
      \infer-{\ctx_1 \vdash [\alpha/x]P :: (\alpha : A)}{
        \ctx_1 \vdash P :: (x : A)}}
  \end{gather*}
  and
  \begin{gather*}
    \mathcal{E}
    =
    \infer{\cctx_0 , \alpha{:}A \vDash \thread[\alpha/x]{c}{Q} \cc \cell{c}{} :: (\cctx_0 , \alpha{:}A , c{:}C)}{
      (c \notin \dom{(\cctx_0 , \alpha{:}A)}) &
      \infer{\cctx_0 , \alpha{:}A \vDash_c \ctx_2 , \alpha{:}A}{
        \infer{\cctx_0 , \alpha{:}A \vDash_c \ctx_2}{
          \cctx_0 \vDash_c \ctx_2}} &
      \infer-{\ctx_2 , \alpha{:}A \vdash [\alpha/x]Q :: (c : C)}{
        \ctx_2 , x{:}A \vdash Q :: (c : C)}}
  \end{gather*}
  \begin{itemize}
  \item $\alpha \notin \dom{\cctx_0}$ because $\alpha$ is chosen to be fresh.
  \item Because $c \notin \dom{\cctx_0}$ is given and $\alpha$ is fresh, $c \notin \dom{(\cctx_0 , \alpha{:}A)}$ follows.
  \item Because $\ctx_1 \vdash P :: (x : A)$ for a fresh $x$, the context $\ctx_1$ must not contain any eligible addresses.
    Fortunately, because $\alpha$ is fresh, no address in $\dom{\cctx_0}$ will have the form $\alpha \cdot p$.
    Therefore $\cctx_0 \vDash_\alpha \ctx_1$.

    Moreover, because $\ctx_1$ contains no eligible addresses, a lemma gives $\cctx_0 \vDash_c \ctx_2$ from $\cctx_0 \vDash_c \ctx_1 , \ctx_2$.
  \end{itemize}
\end{proof}

\begin{lemma}\label{lem:cseq}
  If\/ $\ctx \vdash P :: (a : A)$, then $\cseq{P} = \set{a}$.
\end{lemma}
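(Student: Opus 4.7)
The plan is to prove this by induction on the structure of the given typing derivation $\ctx \vdash P :: (a : A)$. For each of the inference rules, we check that the defining clause of $\cseq$ yields exactly the singleton $\set{a}$ where $a$ is the destination annotated on the conclusion.

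First I would dispatch the base cases, which are precisely the process forms whose $\cseq$ clause is a literal singleton: the $\jrule{ID}$ rule gives $\cseq{\id{a}{b}} = \set{a}$; the $\arule{\tensor}$, $\arule{\one}$, $\arule{\plus}_k$, $\arule{\dn}$, and $\rrule{\imp}$ rules all produce processes of the form $\writeV{a}{S}$ and the clause for writes is $\set{a}$; and the $\arule{\imp}$ rule produces $\impA{a}{a_1,a_2}$ typed with destination $a_2$, matching $\cseq{\impA{a}{a_1,a_2}} = \set{a_2}$. The $\jrule{CALL}$ rule gives $\callfun{\mathit{p}}{c}{a_1\dotsb a_n}$ with destination $c$, which (extending the definition in \cref{app:cseq} in the obvious way) satisfies $\cseq = \set{c}$.

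Next I would handle the inductive cases. For $\jrule{CUT}$ and $\p{\jrule{SNIP}}$, the continuation $Q$ is typed with the same destination $c$ as the overall process, so the inductive hypothesis on the right premise gives $\cseq{Q} = \set{c}$, and the clauses $\cseq{\cut{x}{P}{Q}} = \cseq{\snip{a}{P}{Q}} = \cseq{Q}$ close the case. The reading rules $\lrule{\tensor}$, $\lrule{\one}$, and $\lrule{\dn}$ are identical in shape: their premise types the subprocess with destination $c$, so the inductive hypothesis plus the equational clause $\cseq{\cdot} = \cseq{P}$ delivers the result. For $\lrule{\plus}$, each branch $P_\ell$ is typed with the same destination $c$, so by the inductive hypothesis each $\cseq{P_\ell} = \set{c}$, and therefore $\bigcup_{\ell \in L} \cseq{P_\ell} = \set{c}$ as required. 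Finally, the structural rules $\jrule{W}$, $\jrule{C}$, and $\jrule{C}_{\jrule{E}}$ leave the process and its destination unchanged, so the inductive hypothesis transfers immediately.

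I do not anticipate any substantive obstacle; the lemma is essentially a syntactic sanity check confirming that $\cseq$ is a correct static approximation of the destination of a well-typed process. The only mild point to watch is that all branches of a $\plusL$ share the same destination (which is forced by the typing rule), so that the union over branches collapses to a singleton rather than enlarging the set. This is immediate from the formulation of $\lrule{\plus}$, but it is the one place where the argument does not reduce to a direct appeal to a single inductive hypothesis.
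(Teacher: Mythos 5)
Your proof is correct and follows the same route as the paper: induction on the structure of the typing derivation, with the only nontrivial point being that all branches of $\lrule{\plus}$ share the destination $c$, so the union $\bigcup_{\ell \in L} \cseq{P_\ell}$ collapses to the singleton $\set{c}$ --- which is exactly the one case the paper writes out. Your side remark that the definition of $\cseq{}$ in the appendix lacks an explicit clause for process calls is a fair observation, but extending it in the obvious way is harmless and does not affect the argument.
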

\begin{proof}
  By induction on the structure of the given derivation.
  The most interesting case is as follows.
  \paragraph*{Case:}
  \begin{equation*}
    \infer[\lrule{\plus}]{\ctx , a{:}\plus*[\ell \in L]{\ell\colon A_\ell} \vdash \plusL[\ell \in L]{a}{{\ell} => P_\ell} :: (c : C)}{
      \forallseq{\ell \in L}{\ctx , a{\cdot}\ptagg{\ell}{:}A_\ell \vdash P_\ell :: (c : C)}}
  \end{equation*}
  By the inductive hypothesis, $\cseq{P_\ell} = \set{c}$ for all $\ell \in L$.
  Then $\bigcup_{\ell \in L} \cseq{P_\ell} = \set{c}$, as required.
\end{proof}

\begin{theorem}[Progress]
  If\/ $\vDash \cnf :: \cctx$, then either $\cnf$ is final or $\cnf \stepsto \cnf'$ for some $\cnf'$.
\end{theorem}
\begin{proof}
  By right-to-left induction on the structure of the given derivation.
  \paragraph*{Case:}
  \begin{equation*}
    \infer[\jrule{JOIN}]{\vDash \cnf \cc \thread{c}{(\snip{a}{P}{Q})} \cc \cell{c}{} :: (\cctx , c{:}C)}{
      \vDash \cnf :: \cctx &
      \infer{\cctx \vDash \thread{c}{(\snip{a}{P}{Q})} \cc \cell{c}{} :: (\cctx , c{:}C)}{
        (c \notin \dom{\cctx}) &
        \cctx \vDash_c \ctx_1 , \ctx_2 &
        \infer[\p{\jrule{SNIP}}]{\ctx_1 , \ctx_2 \vdash \snip{a}{P}{Q} :: (c : C)}{
          \ctx_1 \vdash P :: (a : A) & \ctx_2 , \eli{a{:}A} \vdash Q :: (c : C)}}}
  \end{equation*}
  By Lemma~\ref{lem:cseq}, we have $\cseq{P} = \set{a}$.
  Therefore,
  \begin{equation*}
    \cnf \cc \thread{c}{(\snip{a}{P}{Q})} \cc \cell{c}{} \stepsto \cnf \cc \thread{a}{P} \cc \cell{a}{} \cc \thread{c}{Q} \cc \cell{c}{}
    \,,
  \end{equation*}
  as required.

  \paragraph*{Case:}
  \begin{equation*}
    \infer[\jrule{JOIN}]{\vDash \cnf \cc \thread{c}{\plusL[\ell \in L]{a}{{\ell} => P_\ell}} \cc \cell{c}{} :: (\cctx , c{:}C)}{
      \vDash \cnf :: \cctx &
      \infer{\cctx \vDash \thread{c}{\plusL[\ell \in L]{a}{{\ell} => P_\ell}} \cc \cell{c}{} :: (\cctx , c{:}C)}{
        (c \notin \dom{\cctx}) &
        \cctx \vDash_c \ctx , a{:}\plus*[\ell \in L]{\ell\colon A_\ell} &
        \infer[\lrule{\plus}]{\ctx , a{:}\plus*[\ell \in L]{\ell\colon A_\ell} \vdash \plusL[\ell \in L]{a}{{\ell} => P_\ell} :: (c : C)}{
          \forallseq{\ell \in L}{\ctx , a{\cdot}\ptagg{\ell}{:}A_\ell \vdash P_\ell :: (c : C)}}}}
  \end{equation*}
  By the inductive hypothesis, either $\cnf$ is final or $\cnf \stepsto \cnf'$ for some $\cnf'$.
  \begin{itemize}
  \item Suppose that $\cnf \stepsto \cnf'$ for some $\cnf'$.
    Then $\cnf \cc \thread{c}{\plusL[\ell \in L]{a}{{\ell} => P_\ell}} \cc \cell{c}{} \stepsto \cnf' \cc \thread{c}{\plusL[\ell \in L]{a}{{\ell} => P_\ell}} \cc \cell{c}{}$.
  \item Suppose that $\cnf$ is final.
    Because $\cctx \vDash_c \ctx , a{:}\plus*[\ell \in L]{\ell\colon A_\ell}$, we also have $a{:}\plus*[\ell \in L]{\ell\colon A_\ell} \in \cctx$.
    Then, by inversion on the derivation of $\vDash \cnf :: \cctx$, it follows that $a{\cdot}\ptagg{k}{:}A_k \in \cctx$ and $\cnf = \cnf_0 \cc \cell{a{\cdot}\ptagg{k}}{\plusV{k}}$ for some $\cnf_0$ and $k \in L$.
    Therefore $\cnf \cc \thread{c}{\plusL[\ell \in L]{a}{{\ell} => P_\ell}} \cc \cell{c}{} \stepsto \cnf \cc \thread{c}{P_k} \cc \cell{c}{}$, as required.
  \end{itemize}
\end{proof}

\end{document}